\definecolor{myurlcolor}{rgb}{0,0,0.7}
\newcommand{\tinyspace}{\mspace{1mu}}
\newcommand{\proj}[1]{| #1\rangle\!\langle #1 |}
\newcommand{\Prok}[2]{| #1\rangle\!\langle #2 |}
\newcommand{\iinner}[2]{\langle #1 | #2\rangle}
\DeclareMathOperator{\trace}{Tr}
\newcommand{\Ptr}[2]{\trace_{#1}\Pa{#2}}
\newcommand{\Tr}[1]{\Ptr{}{#1}}
\newcommand{\Innerm}[3]{\left\langle #1 \left| #2 \right| #3 \right\rangle}
\newcommand{\Pa}[1]{\left[#1\right]}
\newcommand{\abs}[1]{\left\lvert\tinyspace #1 \tinyspace\right\rvert}
\newcommand{\norm}[1]{\left\lVert #1 \right\rVert}
\theoremstyle{plain}
\newtheorem{thm}{Theorem}
\newtheorem{lem}[thm]{Lemma}
\newtheorem{prop}[thm]{Proposition}
\newtheorem{cor}[thm]{Corollary}
\newcommand*{\myproofname}{Proof}
\newenvironment{mproof}[1][\myproofname]{\begin{proof}[#1]}{\end{proof}}
\def\ot{\otimes}
\def\complex{\mathbb{C}}
\def\real{\mathbb{R}}
\def\cI{\mathcal{I}}
\def\cD{\mathcal{D}}
\def\cH{\mathcal{H}}
\def\cB{\mathcal{B}}
\begin{document}

 \author{Kaifeng Bu}
 \email{kfbu@fas.harvard.edu}
 \affiliation{School of Mathematical Sciences, Zhejiang University, Hangzhou 310027, PR~China}
 \affiliation{Department of Physics, Harvard University, Cambridge, MA 02138, USA}

  \author{Lu Li}
 \affiliation{School of Mathematical Sciences, Zhejiang University, Hangzhou 310027, PR~China}
    \author{Shao-Ming Fei}
    \email{feishm@cnu.edu.cn}
 \affiliation{School of Mathematical Sciences, Capital Normal University, Beijing 100048, PR China}
  \affiliation{Max-Planck-Institute for Mathematics in the Sciences, 04103 Leipzig, Germany}

 \author{Junde Wu}
 \affiliation{School of Mathematical Sciences, Zhejiang University, Hangzhou 310027, PR~China}

\title{Distribution of coherence in bipartite systems based on incoherent-quantum coherence measures}

\begin{abstract}
The distribution of quantum coherence in multipartite systems is one of the basic problems in the resource theory of coherence. While the usual coherence measures are defined on a single system and  cannot capture the nonlocal correlation
between subsystems, in order to  deal with the distribution of coherence it is crucial to quantify the coherence in bipartite systems properly. Here, we introduce incoherent-quantum (IQ) coherence measures on  bipartite systems, which can characterize the correlations between systems. According to the  IQ
coherence measures on bipartite systems, we find the distribution of coherence of formation and assistance  in bipartite systems: the
 total coherence of formation  is lower  bounded by the sum of
 coherence of formation  in each subsystem and the entanglement of formation between the subsystems,
 while the total coherence of assistance is upper bounded by the sum of
 coherence of assistance in each subsystem and the entanglement of assistance between subsystems.
 Besides, we also obtain the tradeoff relation between
 the coherence cost  and entanglement cost,  distillable coherence and distillable
 entanglement in bipartite systems.
Thus, the IQ coherence measures introduced here
truly capture the nonlocal correlation between subsystems and  reveal the distribution of coherence in bipartite systems.
\end{abstract}

\maketitle

% %================================%
\section{Introduction}
% %================================%

Quantum coherence, stemming from the superposition rule of quantum mechanics, can capture the feature of quantumness in a single system,
and  play an important role in
 a variety of applications
ranging from   thermodynamics \cite{Lostaglio2015,Lostaglio2015NC} to metrology \cite{Giovannetti2011}.
Recently, following the method in quantum information theory, the resource theory of coherence has been developed \cite{Baumgratz2014,Girolami2014,Streltsov2015,Winter2016,Killoran2016,Chitambar2016,Chitambar2016a}.
Besides quantum coherence, there are other resource theory including quantum entanglement \cite{HorodeckiRMP09}, asymmetry \cite{Bartlett2007,Gour2008,Gour2009,Marvian2012,Marvian2013,Marvian14,Marvian2014}, thermodynamics \cite{Fernando2013}, and steering \cite{Rodrigo2015}, where all these quantum resource are helpful to quantum information processing tasks.

Any resource theory consists of two basic elements:
free states and free operation. The state (operation)
outside the sets of free states (operation) is called
resource. For example, the free states in the resource theory of coherence is called incoherent states
and the corresponding free operations is called incoherent operations \cite{Baumgratz2014}.
The resource measures are introduced to
quantify the amount of  resource in a given quantum state.
To quantify the coherence in  a single system, several
operational coherence measures has been proposed, namely,
relative entropy of coherence \cite{Baumgratz2014}, $l_1$ norm of coherence \cite{Baumgratz2014}, coherence of formation \cite{Winter2016},
robustness of coherence \cite{Napoli2016}, coherence weight \cite{Bu2017asym}  and max-relative entropy of coherence \cite{Bu2017b,Chitambar2016b},  where relative entropy of coherence characterizes the optimal rate to distill maximally coherent state
from a given quantum state \cite{Winter2016}, coherence of formation is equal to minimal cost of maximal coherent state to prepare
the given state \cite{Winter2016} and max-relative entropy of coherence can be interpreted as the maximal overlap with the maximally coherent state
under incoherent operations \cite{Bu2017b}.

However, as the observation and characterization of the properties of quantum systems is often affected by the
coupling to the environment,
the effect of environment on quantifying coherence has to be taken into account.  Thus, incoherent-quantum (IQ) coherence measures on bipartite systems are introduced here, which can not only quantify the coherence in local subsystem but also the collective coherence between systems and thus
plays a crucial role in the distribution of coherence in multipartite systems \cite{Bu2017c}.
In view of the significance of
IQ coherence measures, we investigate the properties of IQ coherence measures in details and  the distribution of coherence in bipartite systems in terms of other coherence measures
such as coherence of formation and assistance.

Here, we introduce  incoherent-quantum (IQ) coherence measures defined by
relative entropy, max-relative entropy and $l_1$ norm on bipartite systems to quantify the
coherence in the system with the access to a quantum memory.
We also introduce the IQ coherence of formation and assistance on bipartite systems, by which we find the distribution of coherence formation and  assistance  in bipartite systems: the total coherence of formation is lower  bounded by the sum of coherence of formation  in each  subsystem and
entanglement of formation  between subsystems,
while the total coherence of assistance is  upper bounded by the sum of coherence of assistance in each  subsystem and
entanglement of  assistance between subsystems.
Besides, we  find the relationship between coherence cost ( distillable coherence ) and entanglement cost ( distillable entanglement) in bipartite systems.
Moreover, we obtain the monogamy relationship for IQ coherence measures ( such as IQ coherence of assistance and formation ) in tripartite systems, which illustrates the distribution of coherence
in multipartite systems.
Furthermore, we discuss the relationship between different IQ coherence measures, such as the equivalence between IQ coherence measures
defined by max-relative entropy and $l_1$ norm.
% %================================%
\section{Preliminaries}\label{sec:pre}
% %================================%
Let $\cH$ be a d-dimensional Hilbert space and $\cD(\cH)$ be the set
of density operators acting on $\cH$. Let us first recall some
basic facts about max- and min- relative entropies and the resource theory of coherence.

\noindent
{\it \textbf{Max-and min-relative entropy}.}---Given two operators $\rho$ and $\sigma$
with $\rho\geq0$, $\Tr{\rho}\leq 1$ and $\sigma\geq0$, the max-relative entropy of $\rho$ with respect to $\sigma$ \cite{Datta2009IEEE,Datta2009} is defined as
\begin{eqnarray*}
D_{\max}(\rho||\sigma):=\min\set{\lambda\in\real_+:\rho\leq2^\lambda\sigma},
\end{eqnarray*}
where $D_{\max}(\rho||\sigma)$ is well defined if $\text{supp}[\rho]\subset \text{supp}[\sigma]$ with $\text{supp}[\rho]$ being the
support of $\rho$.
The min-relative entropy of $\rho$ with respect to $\sigma$ \cite{Datta2009IEEE,Datta2009} is defined as
\begin{eqnarray*}
D_{\min}(\rho||\sigma):=-\log\Tr{\Pi_{\rho}\sigma},
\end{eqnarray*}
where $\Pi_{\rho}$ is the projector on the support of $\rho$.

\noindent
{\it \textbf{Resource theories of quantum coherence}.}---Given a fixed  reference basis $\set{\ket{i}}^{d-1}_{i=0}$
for some d-dimensional Hilbert space, any quantum
state which is diagonal in the reference basis is the free state
in the resource theory of coherence and the set of incoherent states is denoted by $\cI$. However, there is still general
consensus on the set of free operations in the resource theory of coherence. Here, we refer
incoherent operations (IO) \cite{Baumgratz2014} as the free operations, where
incoherent operations (IO) is the set of all  quantum operations $\Phi$ that
admit a set of Kraus operators $\set{K_i}_i$ such that $\Phi(\cdot)=\sum_iK_i(\cdot)K^\dag_i$ and
$K_i\cI K^\dag_i\subset\cI$ for any $i$ \cite{Baumgratz2014}. Besides,
several operational coherence measures have been proposed, which are listed as follows,

(i) $l_1$ norm of coherence \cite{Baumgratz2014},
\begin{eqnarray*}
C_{l_1}(\rho) = \sum_{\substack{{i,j=0}\\{i\neq j}}}^{d-1} |\bra{i}\rho\ket{j}|,
\end{eqnarray*}

(ii)  relative entropy of coherence \cite{Baumgratz2014},
\begin{eqnarray*}
C_r(\rho)=S(\Delta(\rho))-S(\rho),
\end{eqnarray*}
where $S(\rho)=-\Tr{\rho\log\rho}$ is von Neumann entropy,

(iii)  max-relative entropy of coherence \cite{Bu2017b},
\begin{eqnarray*}
C_{\max}(\rho)=\min_{\sigma\in\cI}D_{\max}(\rho||\sigma),
\end{eqnarray*}

(iv) coherence of formation \cite{Winter2016},
\begin{eqnarray*}
C_f(\rho)=\min_{\rho=\sum_ip_i\proj{\psi_i}}\sum_ip_iS(\Delta(\psi_i)),
\end{eqnarray*}
where the minimization is taken over all pure state decomposition of
$\rho$,

(v) coherence of assistance \cite{Chitambar2016},
\begin{eqnarray*}
C_a(\rho)=\max_{\rho=\sum_ip_i\proj{\psi_i}}\sum_ip_iS(\Delta(\psi_i)),
\end{eqnarray*}
where the maximization is taken over all pure state decomposition of $\rho$,

(vi) coherence weight \cite{Bu2017asym},
\begin{eqnarray*}
C_w(\rho)&=& \min\left\{s\geq0:\rho=(1-s)\sigma+s\tau,  \sigma\in \mathcal{I}, \right.\nonumber\\
&&\left.\tau\in\cD(\cH)\right\}.
\end{eqnarray*}

% %================================%
\section{ entropic IQ coherence measure }
% %================================%
Given a bipartite system $\cH_A\ot\cH_B$ with a fixed basis $\set{\ket{i}_A}_i$ of $\cH_A$, we can define the
 relative entropy of incoherent-quantum (IQ) coherence for any bipartite state $\rho_{AB}\in\cD(\cH_A\ot\cH_B)$ as follows  \cite{Chitambar2016},
\begin{eqnarray*}
C^{A|B}_r(\rho_{AB})=\min_{\sigma_{A|B}\in\cI Q}S(\rho_{AB}||\sigma_{A|B}),
\end{eqnarray*}
where  the set of incoherent-quantum  states $\cI Q$ \cite{Chitambar2016,StreltsovPRX2017} is given by
\begin{eqnarray*}\nonumber
\nonumber\cI Q=\{\sigma_{A|B}\in\cD(\cH_A\ot \cH_B)|\sigma_{A|B}=\sum_ip_i\sigma^A_i\ot \tau^B_i,\\ \nonumber
\sigma^A_i~ \text{is incoherent}, \tau^B_i\in\cD(\cH_B)\},
\end{eqnarray*}
and $C^{A|B}_r$ gives an upper bound for assistant distillation of coherence \cite{Chitambar2016,StreltsovPRX2017}.
Max- and min-relative entropies of IQ  coherence have also been defined  in  Ref. \cite{Bu2017c} as follows,
\begin{eqnarray*}
C^{A|B}_{\max}(\rho_{AB})
=\min_{\sigma_{A|B}\in \cI Q}D_{\max}(\rho_{AB}||\sigma_{A|B}),\\
C^{A|B}_{\min}(\rho_{AB})
=\min_{\sigma_{A|B}\in \cI Q}D_{\min}(\rho_{AB}||\sigma_{A|B}),
\end{eqnarray*}
where $C^{A|B}_{\max}$ captures the maximal advantage  of bipartite states in certain subchannel discrimination problems \cite{Bu2017c}.

For IQ coherence measure $C^{A|B}$, the following properties are considered:
(i) positivity, $C^{A|B}(\rho_{AB})\geq 0$ and $C^{A|B}(\rho_{AB})\geq 0$
iff $\rho_{A|B}\in \cI Q$;
(ii) monotonicity under  incoherent operation  on A side, that is, $C^{A|B}(\Lambda^A_{IO}\ot \mathbb{I}(\rho_{AB}))\leq C^{A|B}(\rho_{AB})$;
(iii) strong monotonicity under incoherent operation on A side, that is, for incoherent operation   $\Lambda^A_{IO}(\cdot)=\sum_i K^A_i(\cdot)K^{A\dag}_i$
with $K^A_i\cI K^{A\dag}_i\subset\cI$,
$\sum_i p_iC^{A|B}(\tilde{\rho}_i)\leq C^{A|B}(\rho)$,
where $p_i=\Tr{K^A_i\rho_{AB} K^{A\dag}_i}$ and $\tilde{\rho}_i=K^A_i\rho_{AB} K^{A\dag}_i/p_i$;
(iv) monotonicity under quantum operation  on B side, that is, $C^{A|B}(\mathbb{I}\ot \Lambda^B(\rho_{AB}))\leq C^{A|B}(\rho_{AB})$;
(v) convexity, that is,  for $\rho_{AB}=\sum^n_ip_i\rho_i$,
$C^{A|B}(\rho_{AB})\leq\sum_{i}C^{A|B}(\rho_i)$.

Note that,
 $C^{A|B}_r$ satisfies all these properties, and $C^{A|B}_{\max}$ satisfies all these properties except (v).
 However, $C^{A|B}_{\max}$ satisfies the quasi-convexity instead of convexity,
that is,  for $\rho_{AB}=\sum^n_ip_i\rho_i$,
$C^{A|B}_{\max}(\rho_{AB})\leq\max_{i}C^{A|B}_{\max}(\rho_i)$ ( See \cite{Bu2017c}).
For bipartite pure state $\ket{\psi}_{AB}$, it  can be written as $\ket{\psi}_{AB}=\sum^{d_A}_{i=1}\sqrt{p_i}\ket{i}_A\ket{u_i}_B$ in the local basis $\set{\ket{i}_A}_i$ of $\cH_A$, thus
$C^{A|B}_r(\psi_{AB})=-\sum_ip_i\log p_i=S(\Delta_A(\rho_A))\leq \log d_A$ and $C^{A|B}_{\max}(\psi_{AB})=2\log(\sum^{d_A}_{i=1}\sqrt{p_i})\leq\log d_A$.
Due to the convexity of $C^{A|B}_r$ and quasi-convexity of $C^{A|B}_{\max}$, $C^{A|B}_r(\rho_{AB})\leq \log d_A$ and $C^{A|B}_{\max}(\rho_{AB})\leq \log d_A$ for
any bipartite state $\rho_{AB}\in\cD(\cH_A\ot\cH_B)$, which means that the maximal value for IQ coherence measures $C^{A|B}_r$ and $C^{A|B}_{\max}$
does not depend on the subsystem B.
Here, we only consider the monotonicity of IQ coherence measures under
local incoherent operations on A side and local quantum operations on B side, while
 the (strong) monotonicity of IQ coherence under  the local
incoherent-quantum operations and classical communication (LIQCC) \cite{Chitambar2016,StreltsovPRX2017} is still unknown as the
characterization of LIQCC (such as the Kraus operators of LIQCC) is unclear.

According to the definition, $C^{A|B}_r(\rho_{AB})\geq C_r(\rho_A)$ with
 $\rho_A=\Ptr{B}{\rho_{AB}}$ being the reduced state.
For any pure bipartite  state $\ket{\psi}_{AB}$, the following relation holds,
\begin{eqnarray}\label{prop:pure_cr}
C^{A|B}_r(\psi_{AB})=C_r(\rho_A)+S(\rho_B),
\end{eqnarray}
where $S(\rho_B)$ is the von Neumman entropy of the reduced state $\rho_B=\Ptr{A}{\proj{\psi}_{AB}}$ on system B.
This comes directly from the definition of $C^{A|B}_r$ and the fact that $S(\rho_A)=S(\rho_B)$ for
pure bipartite state.
In general, for any bipartite state $\rho_{AB}$, $C^{A|B}_r(\rho_{AB})\geq C_r(\rho_A)+\delta_{A\to B} $ \cite{Bu2017c}, where $\delta_{A\to B}$ is the quantum discord between A and B for state $\rho_{AB}$ \cite{Ollivier2001}.
For pure tripartite  states, we have the following proposition.

\begin{prop}\label{prop:diff_tri_pu}
Given a pure tripartite state $\ket{\psi}_{ABC}\in\cD(\cH_A\ot\cH_B\ot\cH_C)$, it holds that
\begin{eqnarray}
C^{A|B}_r(\rho_{AB})-C^{A|C}_r(\rho_{AC})=S(\rho_B)-S(\rho_C),
\end{eqnarray}
where $\rho_{AB}, \rho_{AC}$ are the corresponding reduced states of $\psi_{ABC}$.

\end{prop}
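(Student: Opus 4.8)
The plan is to first establish the closed form
\[
C^{A|B}_r(\rho_{AB})=S(\Delta_A(\rho_{AB}))-S(\rho_{AB}),
\]
where $\Delta_A(X)=\sum_i(\proj{i}_A\ot\I)X(\proj{i}_A\ot\I)$ denotes complete dephasing of $A$ in the reference basis $\set{\ket{i}_A}_i$, and then to read off the proposition from the symmetries of the pure state $\ket{\psi}_{ABC}$. To get the closed form, note that any $\sigma_{A|B}=\sum_i p_i\sigma^A_i\ot\tau^B_i\in\cI Q$ has diagonal $\sigma^A_i$, so $\Delta_A(\sigma_{A|B})=\sigma_{A|B}$ and $\Delta_A(\log\sigma_{A|B})=\log\sigma_{A|B}$. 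Since $\Delta_A$ is self-adjoint, $\Tr{\rho_{AB}\log\sigma_{A|B}}=\Tr{\Delta_A(\rho_{AB})\log\sigma_{A|B}}$, which yields
\[
S(\rho_{AB}||\sigma_{A|B})=S(\Delta_A(\rho_{AB}))-S(\rho_{AB})+S(\Delta_A(\rho_{AB})||\sigma_{A|B}).
\]
As $\Delta_A(\rho_{AB})\in\cI Q$ and the last term is nonnegative and vanishes only at $\sigma_{A|B}=\Delta_A(\rho_{AB})$, minimizing over $\cI Q$ proves the closed form.

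With this in hand I would subtract the two measures,
\[
C^{A|B}_r(\rho_{AB})-C^{A|C}_r(\rho_{AC})=\big[S(\Delta_A(\rho_{AB}))-S(\Delta_A(\rho_{AC}))\big]-\big[S(\rho_{AB})-S(\rho_{AC})\big].
\]
The second bracket is immediate from purity of $\ket{\psi}_{ABC}$: the complementary bipartitions $AB|C$ and $AC|B$ give $S(\rho_{AB})=S(\rho_C)$ and $S(\rho_{AC})=S(\rho_B)$, so the bracket equals $S(\rho_C)-S(\rho_B)$.

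For the first bracket I would expand $\ket{\psi}_{ABC}=\sum_i\ket{i}_A\ket{\phi_i}_{BC}$ with $\ket{\phi_i}_{BC}:=\bra{i}_A\ket{\psi}_{ABC}$, set $p_i:=\braket{\phi_i|\phi_i}$, and let $\ket{\hat\phi_i}_{BC}=\ket{\phi_i}_{BC}/\sqrt{p_i}$ be normalized. Because $\Delta_A$ commutes with tracing out $B$ or $C$,
\[
\Delta_A(\rho_{AB})=\sum_i p_i\,\proj{i}_A\ot\Ptr{C}{\proj{\hat\phi_i}_{BC}},\quad \Delta_A(\rho_{AC})=\sum_i p_i\,\proj{i}_A\ot\Ptr{B}{\proj{\hat\phi_i}_{BC}},
\]
which are classical--quantum states with orthogonal blocks, so their entropies are $-\sum_i p_i\log p_i+\sum_i p_i S(\Ptr{C}{\proj{\hat\phi_i}_{BC}})$ and $-\sum_i p_i\log p_i+\sum_i p_i S(\Ptr{B}{\proj{\hat\phi_i}_{BC}})$. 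Each $\ket{\hat\phi_i}_{BC}$ is pure, so $S(\Ptr{C}{\proj{\hat\phi_i}_{BC}})=S(\Ptr{B}{\proj{\hat\phi_i}_{BC}})$ for every $i$; hence the two entropies coincide and the first bracket is $0$. Combining, the difference equals $-(S(\rho_C)-S(\rho_B))=S(\rho_B)-S(\rho_C)$, which is the claim.

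The main obstacle is the closed-form identity: the dephasing bookkeeping ($\Delta_A$-invariance of $\log\sigma_{A|B}$ and self-adjointness of $\Delta_A$) must be handled carefully so that $\Delta_A(\rho_{AB})$ is seen to be the exact optimizer over $\cI Q$. Once that identity is available the rest is routine; the one genuine observation is that conditioning $A$ on each reference-basis outcome leaves a \emph{pure} state on $BC$, which is precisely what forces the two dephased entropies to agree.
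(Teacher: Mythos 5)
Your proof is correct and takes essentially the same route as the paper's: both rest on the closed form $C^{A|B}_r(\rho_{AB})=S(\Delta_A(\rho_{AB}))-S(\rho_{AB})$, expand $\ket{\psi}_{ABC}$ in the reference basis of $A$ so that dephasing produces classical--quantum block structure, and then invoke the equality of reduced-state entropies of pure states (for each conditional state on $BC$ and for $\psi_{ABC}$ itself, giving $S(\rho_{AB})=S(\rho_C)$ and $S(\rho_{AC})=S(\rho_B)$). The only difference is that you prove the closed-form identity explicitly via the $\Delta_A$-invariance of $\cI Q$ states, whereas the paper uses it directly as a known fact; this makes your argument more self-contained but does not change the approach.
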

\begin{proof}
Any tripartite pure state $\ket{\psi}_{ABC}$ can be written as
$\ket{\psi}_{ABC}=\sum_i\sqrt{p_i}\ket{i}_A\ot\ket{u_i}_{BC}$ in the local basis $\set{\ket{i}_A}_i$ of $\cH_A$ with $p_i\geq0$, $\sum_ip_i=1$, and thus
the reduced states $\rho_{AB}$ and $\rho_{AC}$ can be written as
\begin{eqnarray*}
\rho_{AB}=\sum_{i,j}\sqrt{p_ip_j}\Prok{i}{j}_A\ot \Ptr{C}{\Prok{u_i}{u_j}_{BC}},\\
\rho_{AC}=\sum_{i,j}\sqrt{p_ip_j}\Prok{i}{j}_A\ot \Ptr{B}{\Prok{u_i}{u_j}_{BC}}.
\end{eqnarray*}
Due to the definition,
\begin{eqnarray*}
C^{A|B}_r(\rho_{AB})&=&S(\Delta_A(\rho_{AB}))-S(\rho_{AB})\\
&=&S(\Delta_A(\rho_A))+\sum_ip_iS( \Ptr{C}{\proj{u_i}_{BC}})\\
&&-S(\rho_{AB}),\\
C^{A|C}_r(\rho_{AC})&=&S(\Delta_A(\rho_{AC}))-S(\rho_{AC})\\
&=&S(\Delta_A(\rho_A))+\sum_ip_iS( \Ptr{B}{\proj{u_i}_{BC}})\\
&&-S(\rho_{AC}).
\end{eqnarray*}
As for pure states, the von Neumann entropy of the reduced states is equal,
thus $S( \Ptr{C}{\proj{u_i}_{BC}})=S(\Ptr{B}{\proj{u_i}_{BC}})$ for any i,
and $S(\rho_{AB})=S(\rho_C)$, $S(\rho_{AC})=S(\rho_B)$. Therefore, we obtain the
result.

\end{proof}
The Proposition \ref{prop:diff_tri_pu} illustrates that  the difference between $C^{A|B}_r(\rho_{AB})$ and $C^{A|C}_r(\rho_{AC})$ for tripartite pure state is equal to
the difference between the amount of information encoded in  ancillary systems B and C.

In tripartite systems, the monogamy relation for  relative  entropy of IQ coherence has been proposed as
$C^{AB|C}_r(\rho_{ABC})\geq C^{A|C}_r(\rho_{AC})+C^{B|C}_r(\rho_{BC})$ \cite{Bu2017c}, where $\rho_{AB}$ and  $\rho_{AC}$ are the
corresponding reduced states.
However, the relationship between $C^{A|BC}_r(\rho_{ABC})$ and $C^{A|B}_r(\rho_{AB})+C^{A|C}_r(\rho_{AC})$ is still unknown, that is, whether the following relation
holds for all tripartite states   remains to be verified,
\begin{eqnarray}\label{eq:mono_r}
C^{A|BC}_r(\rho_{ABC})\geq C^{A|B}_r(\rho_{AB})+C^{A|C}_r(\rho_{AC}).
\end{eqnarray}
We give an upper bound for the quantity $C^{A|BC}_r(\rho_{ABC})- C^{A|B}_r(\rho_{AB})-C^{A|C}_r(\rho_{AC})$ in terms of conditional entropy and find that the relation  \eqref{eq:mono_r} may not hold in general.

\begin{lem}\label{eq:Cr_vs_ne}
Given a tripartite state $\rho_{ABC}\in\cD(\cH_A\ot\cH_B\ot\cH_C)$, then
\begin{eqnarray}
\nonumber &&C^{A|BC}_r(\rho_{ABC})- C^{A|B}_r(\rho_{AB})-C^{A|C}_r(\rho_{AC})\\
&\leq& -S(A|BC)+S(A|B)+S(A|C),
\end{eqnarray}
where the conditional entropy is defined as $S(X|Y)=S(\rho_{XY})-S(\rho_Y)$.

\end{lem}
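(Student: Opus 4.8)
The plan is to turn the three IQ-coherence terms into von Neumann entropies and reduce the claim to a standard information inequality. The key input is the closed form used already in the proof of Proposition~\ref{prop:diff_tri_pu}: for any bipartition, $C^{X|Y}_r(\rho_{XY})=S(\Delta_X(\rho_{XY}))-S(\rho_{XY})$, where $\Delta_X$ is the dephasing channel on system $X$ in the reference basis. (This holds because every $\sigma\in\cI Q$ is block-diagonal in the $A$-basis, hence fixed by $\Delta_A$, so the Pythagorean identity $S(\rho||\sigma)=S(\rho||\Delta_A\rho)+S(\Delta_A\rho||\sigma)$ identifies $\Delta_A\rho$ as the optimizer.) I would apply this to $C^{A|BC}_r(\rho_{ABC})$, $C^{A|B}_r(\rho_{AB})$ and $C^{A|C}_r(\rho_{AC})$ to express the left-hand side through the entropies of $\rho_{ABC},\rho_{AB},\rho_{AC}$ and their $A$-dephased versions.

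Next I would introduce the classical-quantum state $\rho':=\Delta_A(\rho_{ABC})=\sum_i q_i\proj{i}_A\ot\tilde\rho^{(i)}_{BC}$, whose $A$-marginal $\rho'_A=\sum_i q_i\proj{i}_A$ is diagonal. Its marginals obey $\rho'_{AB}=\Delta_A(\rho_{AB})$ and $\rho'_{AC}=\Delta_A(\rho_{AC})$, while $\rho'_{BC}=\rho_{BC}$, $\rho'_B=\rho_B$ and $\rho'_C=\rho_C$, since dephasing $A$ leaves the $BC$-marginal untouched. Substituting $S(X|Y)=S(\rho_{XY})-S(\rho_Y)$ on the right-hand side and collecting terms, the entropies of the undephased $\rho_{ABC},\rho_{AB},\rho_{AC}$ cancel exactly, and the asserted bound collapses to the single inequality $I(B:C)_{\rho'}\le S(\rho'_A)+I(B:C|A)_{\rho'}$, with $I(B:C)_{\rho'}=S(\rho'_B)+S(\rho'_C)-S(\rho'_{BC})$, $I(B:C|A)_{\rho'}=S(\rho'_{AB})+S(\rho'_{AC})-S(\rho'_{ABC})-S(\rho'_A)$ and $S(\rho'_A)=-\sum_i q_i\log q_i$.

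I would then close this last inequality using three facts about the cq-state $\rho'$: the chain rule $I(B:AC)_{\rho'}=I(A:B)_{\rho'}+I(B:C|A)_{\rho'}$; monotonicity of mutual information under discarding $A$, namely $I(B:C)_{\rho'}\le I(B:AC)_{\rho'}$ (equivalently strong subadditivity); and the Holevo bound $I(A:B)_{\rho'}\le S(\rho'_A)$. Concatenating, $I(B:C)_{\rho'}\le I(A:B)_{\rho'}+I(B:C|A)_{\rho'}\le S(\rho'_A)+I(B:C|A)_{\rho'}$, which is precisely the reduced inequality, completing the proof.

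The delicate points are twofold. The routine-looking second step requires careful bookkeeping to confirm that exactly the undephased tripartite entropies cancel, so that the residual is a clean statement about $\rho'$. The conceptually important point is that the Holevo bound $I(A:B)\le S(A)$ is available only because dephasing has rendered $A$ classical in $\rho'$; for a general tripartite state its naive analogue fails, and this is the mechanism by which the putative monogamy relation~\eqref{eq:mono_r} can be violated.
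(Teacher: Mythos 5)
Your proof is correct, and it shares the paper's skeleton while handling the crucial inequality by a genuinely different route. Both arguments use the closed form $C^{X|Y}_r(\rho_{XY})=S(\Delta_X(\rho_{XY}))-S(\rho_{XY})$ (which the paper invokes without comment; your Pythagorean-identity justification of it is sound) and both reduce the claim, after the same cancellation of the undephased entropies, to the single inequality $S(\Delta_A(\rho_{AB}))+S(\Delta_A(\rho_{AC}))-S(\Delta_A(\rho_{ABC}))\geq S(\rho_B)+S(\rho_C)-S(\rho_{BC})$. The paper proves this by attaching an ancilla $A'$, applying the copying unitary $U_{AA'}\ket{i}_A\ket{0}_{A'}=\ket{i}_A\ket{i}_{A'}$ and dephasing to obtain $\sigma_{ABCA'}=\sum_i\proj{i}_A\ot\rho^{BC}_{ii}\ot\proj{i}_{A'}$, recognizing the left-hand side as the relative entropy $S(\sigma_{ABCA'}||\sigma_{AB}\ot\sigma_{A'C})$, i.e.\ the mutual information across the $AB|A'C$ cut, and then invoking monotonicity of relative entropy under tracing out $AA'$. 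You instead stay on the dephased state $\rho'=\Delta_A(\rho_{ABC})$ and split the same quantity as $S(\rho'_A)+I(B:C|A)_{\rho'}$, closing with the chain rule, strong subadditivity, and the Holevo-type bound $I(A:B)_{\rho'}\leq S(\rho'_A)$, which is valid precisely because $A$ is classical in $\rho'$. The two proofs rest on the same bedrock (strong subadditivity / data processing), but the paper's copy trick packages the chain rule and the classicality of $A$ into one data-processing step, whereas your decomposition is more modular and isolates exactly where classicality is used: the failure of $I(A:B)\leq S(A)$ for entangled states is the mechanism by which relation~\eqref{eq:mono_r} can be violated, a point the paper only makes implicitly through its counterexample. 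Both routes are complete; yours is arguably the more transparent, the paper's the more compact.
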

\begin{proof}
Let us take another system $\cH_{A'}=\cH_A$ and the local basis $\set{\ket{i}_{A'}}_i=\set{\ket{i}_A}_i$.
Define an unitary operator on $\cH_A\ot\cH_{A'}$ such that
$U_{AA'}\ket{i}_A\ket{0}_{A'}=\ket{i}_A\ket{i}_{A'}$.
Hence, for any tripartite state $\rho_{ABC}=\sum_{i,j}\Prok{i}{j}_A\ot\rho^{BC}_{ij}$,
\begin{eqnarray*}
\sigma_{ABCA'}&=&\Delta_A\ot\Delta_{A'}(U_{AA'}\rho_{ABC}\ot\proj{0}_{A'}U^{\dag}_{AA'})\\
&=&\sum_{i}\proj{i}_A\ot\rho^{BC}_{ii}\ot \proj{i}_{A'}.
\end{eqnarray*}
Thus, $\sigma_{AB}=\Ptr{A'C}{\sigma_{ABCA'}}=\sum_{i}\proj{i}_A\ot\rho^{B}_i$,
$\sigma_{A'C}=\Ptr{AB}{\sigma_{ABCA'}}=\sum_{i}\proj{i}_{A'}\ot\rho^{C}_i$, and
$S(\sigma_{AB})=S(\Delta_A(\rho_{AB}))$, $S(\sigma_{A'C})=S(\Delta_A(\rho_{AC}))$ where
$\rho^B_i=\Ptr{C}{\rho^{BC}_{ii}}$ and $\rho^C_{i}=\Ptr{B}{\rho^{BC}_{ii}}$. Then, as
$S(\sigma_{ABCA'})=S(\Delta_A(\rho_{ABC}))$, we have
\begin{eqnarray*}
&&S(\sigma_{ABCA'}||\sigma_{AB}\ot\sigma_{A'C})\\
&=&S(\rho_{AB})+S(\rho_{A'C})-S(\sigma_{ABCA'})\\
&=&S(\Delta_A(\rho_{AB}))+S(\Delta_A(\rho_{AC}))-S(\Delta_{A}(\rho_{ABC})).
\end{eqnarray*}
Since relative entropy is
monotone under partial trace,
then
\begin{eqnarray*}
S(\sigma_{ABCA'}||\sigma_{AB}\ot\sigma_{A'C})
\geq S(\sigma_{BC}||\sigma_B\ot\sigma_C),
\end{eqnarray*}
where $\sigma_{BC}=\rho_{BC}, \sigma_B=\rho_B$ and $\sigma_C=\rho_C$,
that is,
\begin{eqnarray*}
&&S(\Delta_A(\rho_{AB}))+S(\Delta_A(\rho_{AC}))-S(\Delta_{A}(\rho_{ABC}))\\
&\geq& S(\rho_B)+S(\rho_C)-S(\rho_{BC}).
\end{eqnarray*}
Therefore,
\begin{eqnarray*}
&&C^{A|BC}_r(\rho_{ABC})- C^{A|B}_r(\rho_{AB})-C^{A|C}_r(\rho_{AC})\\
&=&S(\rho_{AB})+S(\rho_{AC})-S(\rho_{ABC})\\
&&-[S(\Delta_A(\rho_{AB}))+S(\Delta_A(\rho_{AC}))-S(\Delta_{A}(\rho_{ABC}))]\\
&\leq&S(\rho_{AB})+S(\rho_{AC})-S(\rho_{ABC})\\
&&-[ S(\rho_B)+S(\rho_C)-S(\rho_{BC})]\\
&=&-S(A|BC)+S(A|B)+S(A|C).
\end{eqnarray*}

\end{proof}

The negative conditional entropy quantifies the amount of entanglement as
$S(A|B)<0$ indicates the entanglement between A and B \cite{Cerf1997}.
Thus, the following relation
\begin{eqnarray}
\label{eq:mono_con}-S(A|BC)\geq-S(A|B)-S(A|C),
\end{eqnarray}
can be viewed as a monogamy relation of entanglement, which holds for any pure tripartite state.
Besides, Lemma \ref{eq:Cr_vs_ne} illustrates that the violation of the relation
\eqref{eq:mono_con} will lead to the violation of the  relation
\eqref{eq:mono_r}.

\begin{prop}
There exists some tripartite state $\rho_{ABC}\in\cD(\cH_A\ot\cH_B\ot\cH_C)$ such that
\begin{eqnarray*}
C^{A|BC}_r(\rho_{ABC})\leq C^{A|B}_r(\rho_{AB})+C^{A|C}_r(\rho_{AC}).
\end{eqnarray*}
\end{prop}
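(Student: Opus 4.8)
The plan is to refute \eqref{eq:mono_r} by exhibiting one explicit state and evaluating the three IQ relative entropies of coherence directly, rather than trying to route the argument through Lemma \ref{eq:Cr_vs_ne}. The only tool I need is the closed form already used in that lemma's proof, $C^{A|X}_r(\rho)=S(\Delta_A(\rho))-S(\rho)$, valid for any split in which $X$ is the quantum memory. Since $\Delta_A$ merely dephases the $A$-register in its reference basis, each of the three terms collapses to a difference of two ordinary von Neumann entropies, so the whole computation is elementary once a good state is chosen.

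The candidate I would use is the three-qubit $W$ state $\ket{W}_{ABC}=\tfrac{1}{\sqrt3}(\ket{100}+\ket{010}+\ket{001})$, with reference basis $\{\ket{0}_A,\ket{1}_A\}$ on the incoherent system $A$. First I compute the global term: dephasing $A$ removes exactly the coherences linking the $A=1$ branch $\ket{1}_A\ket{00}_{BC}$ to the $A=0$ branch $\ket{0}_A(\ket{10}+\ket{01})_{BC}$, leaving $\Delta_A\proj{W}$ with spectrum $\{\tfrac23,\tfrac13\}$, so $C^{A|BC}_r(\proj{W})=S(\Delta_A\proj{W})-0=\log 3-\tfrac23$. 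Next I would form the reduced state $\rho_{AB}=\Ptr{C}{\proj{W}}$; the partial trace produces a genuine $A$-coherence (an off-diagonal $\ket{01}\!\bra{10}$ term on $A$), giving $\rho_{AB}$ the spectrum $\{\tfrac23,\tfrac13,0\}$ while $\Delta_A(\rho_{AB})$ is flat with spectrum $\{\tfrac13,\tfrac13,\tfrac13\}$. Hence $C^{A|B}_r(\rho_{AB})=\log 3-(\log 3-\tfrac23)=\tfrac23$, and by the $B\leftrightarrow C$ symmetry of $\ket{W}$ also $C^{A|C}_r(\rho_{AC})=\tfrac23$.

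Collecting the three numbers gives $C^{A|B}_r(\rho_{AB})+C^{A|C}_r(\rho_{AC})=\tfrac43$ against $C^{A|BC}_r(\rho_{ABC})=\log 3-\tfrac23$, so the difference equals $\log 3-2<0$. This establishes the stated inequality, and in fact with strict sign, so $\ket{W}$ is a true counterexample to \eqref{eq:mono_r} rather than a boundary case.

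The hard part is conceptual, not computational: one must first realize that Lemma \ref{eq:Cr_vs_ne} cannot by itself deliver the example. Its right-hand side $-S(A|BC)+S(A|B)+S(A|C)$ is precisely the content of \eqref{eq:mono_con}, and this quantity is nonnegative for \emph{every} state, not only pure ones. Indeed, strong subadditivity gives $S(A|BC)\le\min\{S(A|B),S(A|C)\}$, while weak monotonicity (the dual form of strong subadditivity) gives $S(A|B)+S(A|C)\ge 0$; if $S(A|BC)\le 0$ the relation \eqref{eq:mono_con} is immediate, and if $S(A|BC)>0$ then $S(A|B)+S(A|C)\ge 2S(A|BC)\ge S(A|BC)$. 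Thus the lemma's bound is never negative and can never force $C^{A|BC}_r-C^{A|B}_r-C^{A|C}_r\le 0$. The violation of \eqref{eq:mono_r} is driven instead by the pairwise marginals $\rho_{AB}$ and $\rho_{AC}$ each \emph{acquiring} substantial $A$-coherence after the partial trace, and the only genuine task is to locate a state, such as $\ket{W}$, where their combined contribution overshoots the global value.
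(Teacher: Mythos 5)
Your proof is correct, and it takes a genuinely different route from the paper's. The paper's own proof picks the product state $\rho_{ABC}=\rho_{A_1B}\ot\rho_{A_2C}$ (misprinted there as $\rho_{A_1B}\ot\rho_{A_2B}$) with $\cH_A=\cH_{A_1}\ot\cH_{A_2}$, asserts that this state violates \eqref{eq:mono_con}, and then invokes Lemma \ref{eq:Cr_vs_ne}; you instead evaluate all three quantities for the $W$ state directly from the identity $C^{A|X}_r(\rho)=S(\Delta_A(\rho))-S(\rho)$ and obtain the strict violation $\log 3-\tfrac23<\tfrac43$ (all computations check out). More importantly, your closing observation is not a stylistic aside: it exposes a genuine error in the paper's argument. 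As you prove via strong subadditivity and weak monotonicity, $-S(A|BC)+S(A|B)+S(A|C)\geq0$ for \emph{every} state, so no state whatsoever violates \eqref{eq:mono_con}; in particular the paper's claim fails, since for its product state one computes $-S(A|BC)+S(A|B)+S(A|C)=S(\rho_{A_1})+S(\rho_{A_2})\geq0$. The product state is nevertheless a valid counterexample, but only by direct evaluation: $C^{A|B}_r(\rho_{AB})+C^{A|C}_r(\rho_{AC})-C^{A|BC}_r(\rho_{ABC})=C_r(\rho_{A_1})+C_r(\rho_{A_2})$, which is strictly positive exactly when the local states carry coherence --- the double-counting of local coherence described in the paper's discussion after the proposition, a mechanism that never passes through Lemma \ref{eq:Cr_vs_ne}. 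In short, your approach buys a self-contained, explicitly computed counterexample together with the structural insight that the lemma's bound can never certify a violation of \eqref{eq:mono_r}, while the paper's example, once repaired by direct computation, buys a family of counterexamples showing that the failure of \eqref{eq:mono_r} is generic and driven purely by local coherence.
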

\begin{proof}
It is easy to verify that the tripartite state with the form
$\rho_{ABC}=\rho_{A_1B}\ot\rho_{A_2B}$ violates the relation \eqref{eq:mono_con}
 where $\cH_A=\cH_{A_1}\ot\cH_{A_2}$. Thus, the relation \eqref{eq:mono_r}
does not hold in general.
\end{proof}

In view of the discussion in Ref. \cite{Bu2017c}, the relation \eqref{eq:mono_r} cannot hold in general as the term
$C^{A|B}_r(\rho_{AB})+C^{A|C}_r(\rho_{AC})$ contains two copies of local coherence $C_r(\rho_A)$, whereas the term $C^{A|BC}_r(\rho_{ABC})$ only contains
one copy of $C_r(\rho_A)$. The relation  \eqref{eq:mono_r} will be violated  for the tripartite state $\rho_{ABC}$
with weak correlation between B and C, e.g., $\rho_{ABC}=\rho_{A_1B}\ot\rho_{A_2B}$ where $\cH_A=\cH_{A_1}\ot\cH_{A_2}$.

By introducing smooth max and min-relative entropies of IQ coherence,
the distribution of coherence quantified by relative entropy in multipartite systems has been obtained in Ref. \cite{Bu2017c}.
Besides relative entropy of coherence, we  find the distribution of coherence of formation $C_f$ and assistance $C_a$ in bipartite
systems by introducing the corresponding IQ  coherence measures. The IQ coherence of formation on bipartite systems
is defined as follows,
\begin{eqnarray}
\nonumber C^{A|B}_f(\rho_{AB}):&=&\min\sum_ip_iC^{A|B}_r(\proj{\psi_i}_{AB})\\
&=&\min\sum_ip_iS(\Delta_A(\proj{\psi_i}_{AB})),
\end{eqnarray}
where the minimization is taken over all pure state decomposition ${\rho_{AB}=\sum_ip_i\proj{\psi_i}_{AB}}$. Since $S(\Delta_A(\proj{\psi_i}_{AB}))=S(\Delta_A(\Ptr{B}{\proj{\psi_i}_{AB}}))$  ( see Lemma \ref{lem:c_fd} in Appendix \ref{apen:cf} )
and von Neumann entropy is concave, then we have
\begin{eqnarray}\label{eq:ex_cf}
C^{A|B}_f(\rho_{AB})=\min
\sum_ip_iS[\Delta_A(\Ptr{B}{\rho^{AB}_{i}})],
\end{eqnarray}
where the minimization is taken over all state decomposition of $\rho_{AB}=\sum_ip_i\rho^{AB}_{i}$
 without the restriction of $\rho^{AB}_{i}$ to be pure state. $C^{A|B}_f$ satisfy the properties (i)-(v), where
 (i) and (v) are obvious, (iii) and (iv) are presented in Appendix \ref{apen:cf}, and (ii) comes directly from (iii) and (v).

Here, we consider the distribution of coherence of formation in bipartite systems in terms of  the
IQ coherence of formation $C^{A|B}_f$, where $C^{A|B}_f$ contains not only the local coherence in  subsystem but also the entanglement of formation $E_f$ \cite{Bennett1996} between A and B, for which we have
the following relation.
\begin{lem}\label{prop:cf1}
Given a bipartite state $\rho_{AB}\in\cD(\cH_A\ot\cH_B)$, then
\begin{eqnarray}
C^{A|B}_f(\rho_{AB})\geq C_r(\rho_A)+E_f(\rho_{AB}),
\end{eqnarray}
where $\rho_A$ is the reduced state on subsystem A, and $E_f(\rho_{AB})=\min \sum_i p_iS(\Ptr{A}{\proj{\psi_i}_{AB}})$ with the minimization being taken over all
pure state decomposition of $\rho_{AB}=\sum_ip_i\proj{\psi_i}_{AB}$.
\end{lem}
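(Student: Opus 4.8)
The plan is to pick an optimal pure-state decomposition for $C^{A|B}_f$ and then split the resulting average into a local-coherence part and a marginal-entropy part, each of which I can bound separately. Concretely, I would let $\rho_{AB}=\sum_i p_i\proj{\psi_i}_{AB}$ achieve the minimum in the definition of $C^{A|B}_f(\rho_{AB})$, so that $C^{A|B}_f(\rho_{AB})=\sum_i p_i S(\Delta_A(\proj{\psi_i}_{AB}))$. For each pure term I would invoke the pure-state identity \eqref{prop:pure_cr}: since $S(\proj{\psi_i}_{AB})=0$, the IQ relative entropy of coherence collapses to $C^{A|B}_r(\psi_i)=S(\Delta_A(\proj{\psi_i}_{AB}))$, and \eqref{prop:pure_cr} then gives $S(\Delta_A(\proj{\psi_i}_{AB}))=C_r(\rho^i_A)+S(\rho^i_B)$, where $\rho^i_A=\Ptr{B}{\proj{\psi_i}_{AB}}$ and $\rho^i_B=\Ptr{A}{\proj{\psi_i}_{AB}}$.

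Substituting, I obtain $C^{A|B}_f(\rho_{AB})=\sum_i p_i C_r(\rho^i_A)+\sum_i p_i S(\rho^i_B)$, and I would handle the two sums independently. For the first, convexity of the relative entropy of coherence $C_r$, together with $\rho_A=\sum_i p_i\rho^i_A$ (obtained by tracing out $B$ in the chosen decomposition), gives $\sum_i p_i C_r(\rho^i_A)\geq C_r(\rho_A)$. For the second, the decomposition $\{p_i,\ket{\psi_i}_{AB}\}$ is one admissible pure-state decomposition in the variational definition of $E_f$, so $\sum_i p_i S(\rho^i_B)=\sum_i p_i S(\Ptr{A}{\proj{\psi_i}_{AB}})\geq E_f(\rho_{AB})$. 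Combining the two bounds yields $C^{A|B}_f(\rho_{AB})\geq C_r(\rho_A)+E_f(\rho_{AB})$, as claimed.

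There is no serious analytic obstacle here: once the pure-state identity is recognized, the argument is just convexity plus the variational definition of $E_f$. The only point requiring care is that the single decomposition optimal for $C^{A|B}_f$ need not be optimal for either $C_r(\rho_A)$ or $E_f(\rho_{AB})$, and this is precisely what forces the two inequalities to run in the correct directions: convexity of $C_r$ lower-bounds the averaged local coherence by the coherence of the average state $\rho_A$, while the minimum defining $E_f$ lower-bounds the averaged marginal entropy. I would also make explicit the reason \eqref{prop:pure_cr} applies term by term, namely that each $\ket{\psi_i}_{AB}$ is pure so its global von Neumann entropy vanishes and the IQ relative entropy of coherence reduces exactly to $S(\Delta_A(\proj{\psi_i}_{AB}))$, matching the summand in the definition of $C^{A|B}_f$.
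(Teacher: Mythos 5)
Your proposal is correct and follows essentially the same route as the paper: apply the pure-state identity \eqref{prop:pure_cr} term by term in a pure-state decomposition, then use convexity of $C_r$ and the variational definition of $E_f$ to bound the two resulting sums. The only cosmetic difference is that the paper proves the bound for an arbitrary decomposition and then takes the minimum, whereas you fix an optimal decomposition at the outset; the two are trivially equivalent here.
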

\begin{proof}
For any pure state decomposition of $\rho_{AB}=\sum_ip_i\proj{\psi_i}_{AB}$ with
$\rho^A_i=\Ptr{B}{\proj{\psi_i}_{AB}}$ and $\rho^B_i=\Ptr{A}{\proj{\psi_i}_{AB}}$, we have
\begin{eqnarray*}
\sum_{i}p_iS(\Delta_A(\proj{\psi_i}_{AB}))
&=&\sum_ip_i[C_r(\rho^A_i)+S(\rho^B_i)]\\
&\geq& C_r(\rho_A) +E_f(\rho_{AB}),
\end{eqnarray*}
where the first line comes from \eqref{prop:pure_cr}
and the second line comes from the convexity of $C_r$ and definition of $E_f$. Thus, we
 get the result.

\end{proof}

\begin{lem}\label{prop:cf2}
Given a bipartite state $\rho_{AB}\in\cD(\cH_A\ot\cH_B)$, then
\begin{eqnarray}
C_f(\rho_{AB})\geq C^{A|B}_f(\rho_{AB})+C_f(\rho_B),
\end{eqnarray}
where $\rho_B$ is the reduced state of $\rho_{AB}$ on subsystem B.
\end{lem}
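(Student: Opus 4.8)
The plan is to begin from an optimal pure-state decomposition realizing $C_f(\rho_{AB})$ and to split, state by state, the entropy of the full dephasing into an $A$-contribution and a $B$-contribution. Concretely, let $\rho_{AB}=\sum_k p_k\proj{\psi_k}$ attain the minimum in the definition of $C_f$, so that $C_f(\rho_{AB})=\sum_k p_k S(\Delta(\psi_k))$, where $\Delta=\Delta_A\ot\Delta_B$ is the dephasing in the product reference basis of $\cH_A\ot\cH_B$. I would then expand each $\ket{\psi_k}$ in the local basis of $\cH_A$ as $\ket{\psi_k}=\sum_i\sqrt{q^k_i}\,\ket{i}_A\ket{u^k_i}_B$ with each $\ket{u^k_i}_B$ normalized, which makes the marginal $\rho^k_B:=\Ptr{A}{\proj{\psi_k}}=\sum_i q^k_i\proj{u^k_i}_B$ a pure-state ensemble on $B$.

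The key step is the per-state entropy decomposition. Since $\Delta_A$ and $\Delta_B$ commute, $\Delta(\proj{\psi_k})=\sum_i q^k_i\proj{i}_A\ot\Delta_B(\proj{u^k_i}_B)$ is block diagonal along the orthogonal projectors $\proj{i}_A$, so the entropy identity for classical-quantum states gives
\[
S(\Delta(\psi_k))=S(\Delta_A(\psi_k))+\sum_i q^k_i\,S(\Delta_B(\proj{u^k_i}_B)),
\]
where I have used $S(\Delta_A(\psi_k))=-\sum_i q^k_i\log q^k_i$. The first term is precisely the quantity minimized by $C^{A|B}_f$, while the second is a coherence-of-formation objective for the marginal $\rho^k_B$; hence $\sum_i q^k_i S(\Delta_B(\proj{u^k_i}_B))\geq C_f(\rho^k_B)$ because $\{(q^k_i,\ket{u^k_i}_B)\}$ is a valid decomposition of $\rho^k_B$. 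Averaging over $k$ with weights $p_k$ then finishes the argument: the $A$-terms obey $\sum_k p_k S(\Delta_A(\psi_k))\geq C^{A|B}_f(\rho_{AB})$ since $C^{A|B}_f$ is the minimum over decompositions, and the $B$-terms obey $\sum_k p_k C_f(\rho^k_B)\geq C_f(\rho_B)$ by convexity of $C_f$ together with $\rho_B=\sum_k p_k\rho^k_B$. Combining the three bounds gives $C_f(\rho_{AB})\geq C^{A|B}_f(\rho_{AB})+C_f(\rho_B)$.

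The main point to get right is the exact per-state split: one must check that the off-diagonal $A$-coherences of $\proj{\psi_k}$ are annihilated by $\Delta_A$, so that $\Delta(\proj{\psi_k})$ is genuinely block diagonal in the index $i$ and the decomposition of $S(\Delta(\psi_k))$ holds as an equality rather than only an inequality. Once this is in place, the remainder is bookkeeping: the two inequalities in the final averaging step are immediate from the variational definitions of $C^{A|B}_f$ and $C_f$ and from the convexity of $C_f$, all of which are available from the preceding discussion.
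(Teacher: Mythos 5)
Your proof is correct and is essentially the paper's own argument: both start from an optimal decomposition for $C_f(\rho_{AB})$, expand each $\ket{\psi_k}$ in the $A$ reference basis, and split $S(\Delta_A\ot\Delta_B(\proj{\psi_k}))$ into the Shannon entropy of the $A$-amplitudes plus the averaged $B$-dephasing entropies. The only cosmetic difference is that you bound the $B$-terms in two steps (per-state bound by $C_f(\rho^k_B)$, then convexity of $C_f$), whereas the paper observes directly that $\{p_k q^k_i,\ket{u^k_i}_B\}$ is itself a valid pure-state decomposition of $\rho_B$; these are equivalent.
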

\begin{proof}
There exists an optimal pure state decomposition of $\rho_{AB}=\sum_ip_i\proj{\psi_i}_{AB}$ such that
$C_f(\rho_{AB})=\sum_ip_iS(\Delta_A\ot\Delta_B(\proj{\psi_i}_{AB}))$, where $\ket{\psi_i}_{AB}=\sum_j\sqrt{\lambda_{i,j}}\ket{j}_A\ket{u_{i,j}}_B$
with $\sum_j\lambda_{i,j}=1$ for any i
and $\set{\ket{j}_A}_j$ is the reference basis of subsystem A. Thus,
\begin{eqnarray*}
C_f(\rho_{AB})&=&\sum_ip_iS(\Delta_A\ot\Delta_B(\proj{\psi_i}_{AB}))\\
&=&\sum_ip_i[S(\Delta_A(\rho^A_i))+\sum_j\lambda_{i,j}S(\Delta_B(\proj{u_{i,j}}_B))]\\
&=&\sum_ip_iS(\Delta_A(\rho^A_i))+\sum_{i,j}p_i\lambda_{i,j}S(\Delta_B(\proj{u_{i,j}}_B))\\
&\geq& C^{A|B}_f(\rho_{AB})+C_f(\rho_B),
\end{eqnarray*}
where the inequality results from the definitions of $C^{A|B}_f$ and $C_f$,  and the fact that
$\rho_B=\sum_{i,j}p_i\lambda_{i,j}\proj{u_{i,j}}_B$.
\end{proof}

Combining the above two lemmas, we can obtain  the distribution
of coherence of formation in bipartite systems, where the total coherence of formation is lower bounded by the sum
of  local coherence of formation in subsystems A and B and  the entanglement of formation between the subsystems.
\begin{thm}\label{thm:cf_ef}
Given a bipartite state $\rho_{AB}\in\cD(\cH_A\ot\cH_B)$,
it holds that,
\begin{eqnarray}
\nonumber C_f(\rho_{AB})\geq &&\max\set{C_r(\rho_A)+C_f(\rho_B), C_r(\rho_B)+C_f(\rho_A) }\\
&&~~~~~~~~~~~~+E_f(\rho_{AB}),
\end{eqnarray}
where $\rho_A$ and $\rho_B$ are the corresponding reduced states of $\rho_{AB}$.
\end{thm}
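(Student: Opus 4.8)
The plan is to obtain the stated lower bound by concatenating Lemma~\ref{prop:cf1} and Lemma~\ref{prop:cf2}, and then to recover the second term inside the maximum by exploiting the symmetry of $C_f$ and $E_f$ under exchange of the two subsystems. First I would combine the two lemmas directly: Lemma~\ref{prop:cf2} gives $C_f(\rho_{AB})\geq C^{A|B}_f(\rho_{AB})+C_f(\rho_B)$, while Lemma~\ref{prop:cf1} gives $C^{A|B}_f(\rho_{AB})\geq C_r(\rho_A)+E_f(\rho_{AB})$. Substituting the second inequality into the first yields
\begin{eqnarray*}
C_f(\rho_{AB})\geq C_r(\rho_A)+C_f(\rho_B)+E_f(\rho_{AB}),
\end{eqnarray*}
which is precisely the first of the two candidate bounds appearing inside the maximum.

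For the second bound I would repeat the identical argument with the roles of $A$ and $B$ interchanged, i.e. applying the $B|A$ versions of Lemma~\ref{prop:cf2} and Lemma~\ref{prop:cf1} to get $C_f(\rho_{AB})\geq C^{B|A}_f(\rho_{AB})+C_f(\rho_A)\geq C_r(\rho_B)+C_f(\rho_A)+E_f(\rho_{AB})$. This step relies on two symmetry observations. The coherence of formation $C_f(\rho_{AB})$ is invariant under swapping the two subsystems, since its defining functional $S(\Delta_A\ot\Delta_B(\cdot))$ is symmetric in $A$ and $B$ (both factors merely dephase in the fixed product basis). Likewise $E_f(\rho_{AB})$ is symmetric: for any pure state $\ket{\psi_i}_{AB}$ the two reduced states share the same entropy, $S(\Ptr{A}{\proj{\psi_i}_{AB}})=S(\Ptr{B}{\proj{\psi_i}_{AB}})$, so minimizing $\sum_i p_i S(\rho^B_i)$ over decompositions gives the same value as minimizing $\sum_i p_i S(\rho^A_i)$.

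Since $C_f(\rho_{AB})$ simultaneously dominates both $C_r(\rho_A)+C_f(\rho_B)+E_f(\rho_{AB})$ and $C_r(\rho_B)+C_f(\rho_A)+E_f(\rho_{AB})$, it dominates their maximum, which is exactly the claimed inequality. The argument is a short two-step concatenation, so I do not anticipate a serious obstacle; the only point genuinely requiring care is verifying that Lemma~\ref{prop:cf1} and Lemma~\ref{prop:cf2} transform correctly under the $A\leftrightarrow B$ exchange — in particular that the entanglement of formation defined by tracing out $A$ coincides with the one defined by tracing out $B$ — which follows from the equal-entropy property of the reduced states of bipartite pure states.
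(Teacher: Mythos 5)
Your proposal is correct and follows essentially the same route as the paper: the authors likewise obtain the first bound by chaining Lemma~\ref{prop:cf2} with Lemma~\ref{prop:cf1}, and then invoke the $A\leftrightarrow B$ symmetric argument (which they compress into a single ``similarly'') for the second bound. Your explicit verification that $C_f(\rho_{AB})$ and $E_f(\rho_{AB})$ are invariant under exchanging the subsystems is exactly the content hidden behind that ``similarly,'' so the two proofs coincide in substance.
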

\begin{proof}
Based on Lemmas \ref{prop:cf1} and \ref{prop:cf2}, we have
\begin{eqnarray*}
C_f(\rho_{AB})&\geq& C^{A|B}_f(\rho_{AB})+C_f(\rho_B)\\
&\geq& C_r(\rho_A)+E_f(\rho_{AB})+C_f(\rho_B).
\end{eqnarray*}
Similarly, $C_f(\rho_{AB})\geq C_r(\rho_B)+C_f(\rho_A) +E_f(\rho_{AB})$ can be obtained.
\end{proof}

Now, we give an example such that the equality  in  Theorem \ref{thm:cf_ef}  holds.
For any quantum state $\rho_{B}\in\cD(\cH_B)$, there exists an optimal pure state decomposition
of $\rho_B=\sum_i p_i\proj{u_i}_B$ such that $C_f(\rho_B)=\sum_ip_iS(\Delta_B(\proj{u_i}))$. Let us take the pure bipartite state
$\ket{\psi}_{AB}=\sum_i\sqrt{p_i}\ket{i}_A\ket{u_i}_B$, then
$C_f(\psi_{AB})=C_r(\rho_A)+C_f(\rho_B)+E_f(\psi_{AB})$.

Besides, due to the equivalence between coherence of formation $C_f$ and coherence cost $C_c$ \cite{Winter2016}, we can obtain the
relationship between coherence cost $C_c$ and entanglement cost $E_c$ in bipartite systems from  Theorem \ref{thm:cf_ef}.
\begin{cor}
Given a bipartite state $\rho_{AB}\in\cD(\cH_A\ot\cH_B)$,
it holds that,
\begin{eqnarray}
\nonumber C_c(\rho_{AB})\geq &&\max\set{C_r(\rho_A)+C_c(\rho_B), C_r(\rho_B)+C_c(\rho_A) }\\
&&~~~~~~~~~~~~+E_c(\rho_{AB}),
\end{eqnarray}
where  $\rho_A$ and $\rho_B$ are the corresponding reduced states of $\rho_{AB}$, and the entanglement cost $E_c$ \cite{Hayden2001} is defined as
\begin{eqnarray*}
E_c(\rho)=\inf\set{t:\lim_{n\to\infty}\norm{\rho^{\ot n}-\Lambda_{LOCC}(\phi^{\ot tn}_+)}_{\mathrm{tr}}=0},
\end{eqnarray*}
with $\ket{\phi_+}=\frac{1}{\sqrt{2}}(\ket{00}-\ket{11})$,  $\Lambda_{LOCC}$ being the local operation and classical communication (LOCC) and trace norm $\norm{A}_{\mathrm{tr}}=\Tr{\sqrt{A^\dag A}}$.
\end{cor}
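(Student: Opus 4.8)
The plan is to lift the single-copy bound of Theorem~\ref{thm:cf_ef} to the asymptotic regime by applying it to the tensor powers $\rho_{AB}^{\ot n}$ and then regularizing. I would view $\rho_{AB}^{\ot n}$ as a bipartite state across the cut $A^n|B^n$, whose reduced states are $\rho_A^{\ot n}$ and $\rho_B^{\ot n}$ and whose incoherent reference basis on the $A$ side is the product of the local bases. Theorem~\ref{thm:cf_ef} then yields, for every $n$,
\begin{eqnarray*}
C_f(\rho_{AB}^{\ot n})\geq&&\max\set{C_r(\rho_A^{\ot n})+C_f(\rho_B^{\ot n}),\,C_r(\rho_B^{\ot n})+C_f(\rho_A^{\ot n})}\\
&&+E_f(\rho_{AB}^{\ot n}).
\end{eqnarray*}

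The next step is to divide by $n$ and send $n\to\infty$, invoking three facts. First, the relative entropy of coherence is additive on tensor products, since $\Delta(\cdot)$ factorizes over the product basis; hence $\frac{1}{n}C_r(\rho_A^{\ot n})=C_r(\rho_A)$ exactly (and likewise for $B$), so the local-coherence terms survive the regularization unchanged. Second, by the Winter--Yang characterization \cite{Winter2016} the coherence cost is the regularized coherence of formation, $C_c(\rho)=\lim_{n\to\infty}\frac{1}{n}C_f(\rho^{\ot n})$, which gives $\frac{1}{n}C_f(\rho_{AB}^{\ot n})\to C_c(\rho_{AB})$ as well as $\frac{1}{n}C_f(\rho_B^{\ot n})\to C_c(\rho_B)$ and $\frac{1}{n}C_f(\rho_A^{\ot n})\to C_c(\rho_A)$. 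Third, by the result of \cite{Hayden2001} the entanglement cost is the regularized entanglement of formation, $E_c(\rho)=\lim_{n\to\infty}\frac{1}{n}E_f(\rho^{\ot n})$, so $\frac{1}{n}E_f(\rho_{AB}^{\ot n})\to E_c(\rho_{AB})$.

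Dividing the displayed inequality by the positive scalar $1/n$ commutes with the maximum, so its right-hand side becomes the maximum of two sequences, each a sum of terms that converge individually by the facts above. Since the maximum of two convergent sequences converges to the maximum of their limits, I would pass to the limit on both sides and obtain
\begin{eqnarray*}
C_c(\rho_{AB})\geq&&\max\set{C_r(\rho_A)+C_c(\rho_B),\,C_r(\rho_B)+C_c(\rho_A)}\\
&&+E_c(\rho_{AB}),
\end{eqnarray*}
which is the claim.

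The step demanding the most care is the consistent, simultaneous use of two distinct regularization theorems for the two ``formation'' quantities, together with the genuine single-letter additivity of $C_r$: it is this additivity that keeps the local-coherence contributions as $C_r$ rather than collapsing them into regularized costs, and it is the matching of $C_c=C_f^{\infty}$ with $E_c=E_f^{\infty}$ that lets the single-copy inequality regularize term by term into the stated cost inequality. If one instead invokes the stronger single-copy equivalence $C_c=C_f$ of \cite{Winter2016} directly, the argument collapses: apply Theorem~\ref{thm:cf_ef} to $\rho_{AB}$ itself, replace each $C_f$ by $C_c$, and weaken $E_f(\rho_{AB})\geq E_c(\rho_{AB})$, which is valid since $E_c=E_f^{\infty}\leq E_f$.
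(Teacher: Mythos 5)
Your proposal is correct and follows essentially the same route as the paper: apply Theorem~\ref{thm:cf_ef} to $\rho_{AB}^{\ot n}$, use the additivity of $C_r$ (and, via Winter--Yang, the identification of $C_c$ with the regularized/single-copy $C_f$) together with $E_c=E_f^{\infty}$ from \cite{Hayden2001}, and pass to the limit. Your closing observation that the single-copy equivalence $C_c=C_f$ plus $E_f\geq E_c$ gives the result without any regularization is also valid and slightly shorter, but it is the same circle of facts.
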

\begin{proof}
In view of Theorem \ref{thm:cf_ef}, we have the following relationship for
the bipartite state $\rho^{\ot n}_{AB}$,
\begin{eqnarray*}
 C_f(\rho^{\ot n}_{AB})\geq C_r(\rho^{\ot n}_A)+C_f(\rho ^{\ot n}_B)+E_f(\rho^{\ot n}_{AB}).
\end{eqnarray*}
Since both $C_r$ and $C_f$ are additive \cite{Winter2016} and  $E_c$ is equivalent to the regularized entanglement of formation $E_f$ \cite{Hayden2001},
we have
\begin{eqnarray*}
 C_f(\rho_{AB})\geq C_r(\rho_A)+C_f(\rho_B)+E_c(\rho_{AB}).
\end{eqnarray*}
Similarly, we can also obtain the following relation,
\begin{eqnarray*}
 C_f(\rho_{AB})\geq C_r(\rho_B)+C_f(\rho_A)+E_c(\rho_{AB}).
\end{eqnarray*}
Therefore, we obtain the result.

\end{proof}

It has been proved that relative entropy of coherence $C_r$ is equivalent to distillable coherence $C_d$ \cite{Winter2016}. Thus we can obtain the relationship between the
distillable coherence and distillable entanglement  in bipartite systems as follows.
\begin{cor}
Given a bipartite state $\rho_{AB}\in\cD(\cH_A\ot\cH_B)$,
$C_d(\rho_{AB})$ and  $E_d(\rho_{AB})$ has the following
relationship,
\begin{eqnarray}
C_d(\rho_{AB})\geq C_d(\rho_A)+C_d(\rho_B)+ E_d(\rho_{AB}),
\end{eqnarray}
where$\rho_A$ and $\rho_B$ are the corresponding reduced states of $\rho_{AB}$ and the  distillable entanglement $E_d$ \cite{Plenio2006} is defined as
\begin{eqnarray*}
E_d(\rho)=\inf\set{t :\lim_{n\to\infty}\norm{\Lambda_{LOCC}(\rho^{\ot n})-\phi^{\ot t n}_+}_{\mathrm{tr}}=0}.
\end{eqnarray*}
\end{cor}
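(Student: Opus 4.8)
The plan is to reduce the statement to an inequality purely about relative entropies and then invoke a standard upper bound on distillable entanglement. Since $C_d=C_r$ \cite{Winter2016}, and since the local distillable coherences satisfy $C_d(\rho_A)=C_r(\rho_A)$ and $C_d(\rho_B)=C_r(\rho_B)$, it suffices to establish the single-copy relation $C_r(\rho_{AB})\geq C_r(\rho_A)+C_r(\rho_B)+E_r(\rho_{AB})$, where $E_r$ denotes the relative entropy of entanglement, and then use the standard fact $E_r(\rho_{AB})\geq E_d(\rho_{AB})$. No regularization is needed here, because $C_r$ is additive and the bound $E_r\geq E_d$ already holds at the single-copy level; this is precisely what distinguishes the argument from the entanglement-cost corollary, where one had to pass to $\rho^{\ot n}$ and regularize $E_f$ to $E_c$.

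First I would split off the $B$-coherence. Writing $C^{A|B}_r(\rho_{AB})=S(\Delta_A(\rho_{AB}))-S(\rho_{AB})$ and $C_r(\rho_{AB})=S(\Delta_A\Delta_B(\rho_{AB}))-S(\rho_{AB})$, their difference is $C_r(\rho_{AB})-C^{A|B}_r(\rho_{AB})=S(\Delta_A\Delta_B(\rho_{AB}))-S(\Delta_A(\rho_{AB}))$, which is exactly the relative entropy of $B$-coherence of the state $\Delta_A(\rho_{AB})$. Since this quantity does not increase under the partial trace and the $B$-marginal of $\Delta_A(\rho_{AB})$ is $\rho_B$, it is at least $C_r(\rho_B)$, giving the clean decomposition $C_r(\rho_{AB})\geq C^{A|B}_r(\rho_{AB})+C_r(\rho_B)$.

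It then remains to show $C^{A|B}_r(\rho_{AB})\geq C_r(\rho_A)+E_r(\rho_{AB})$. Here I would use that the optimal incoherent-quantum state for $C^{A|B}_r$ is $\Delta_A(\rho_{AB})$, so that $C^{A|B}_r(\rho_{AB})=S(\rho_{AB}||\Delta_A(\rho_{AB}))$. Two separate lower bounds are immediate: tracing out $B$ and using monotonicity of relative entropy under the partial trace gives $C^{A|B}_r(\rho_{AB})\geq S(\rho_A||\Delta_A(\rho_A))=C_r(\rho_A)$, while the fact that $\Delta_A(\rho_{AB})$ is a classical-quantum, hence separable, state gives $C^{A|B}_r(\rho_{AB})=S(\rho_{AB}||\Delta_A(\rho_{AB}))\geq \min_{\sigma\in\mathrm{SEP}}S(\rho_{AB}||\sigma)=E_r(\rho_{AB})$.

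The main obstacle is combining these two lower bounds \emph{additively} rather than merely taking their maximum. I expect the sharp statement $C^{A|B}_r(\rho_{AB})\geq C_r(\rho_A)+E_r(\rho_{AB})$ (which is tight for pure states, where both sides reduce to $C_r(\rho_A)+S(\rho_B)$) to require exploiting the optimal separable state $\sigma^*$ for $E_r$ together with the Pythagorean-type identity $S(\rho_{AB}||\tau)=C^{A|B}_r(\rho_{AB})+S(\Delta_A(\rho_{AB})||\tau)$, valid for every $A$-incoherent $\tau$, applied to $\tau=\Delta_A(\sigma^*)$, and then controlling the resulting cross term by data processing under $\Delta_A$. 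If a clean additive bound proves elusive, a fallback is to pass to many copies and regularize, replacing $E_r$ by $E_r^\infty$ and invoking $E_d\leq E_r^\infty$; but I would first try to close the single-copy argument, since the word ``thus'' in the statement suggests an immediate reduction. Assembling the decomposition, the crux inequality, and $E_r\geq E_d$ then yields $C_d(\rho_{AB})\geq C_d(\rho_A)+C_d(\rho_B)+E_d(\rho_{AB})$.
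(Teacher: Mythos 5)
Your overall architecture (reduce via $C_d=C_r$ \cite{Winter2016}, split off the $B$-coherence, then add an entanglement term) is reasonable, and your splitting step is correct: indeed $C_r(\rho_{AB})-C^{A|B}_r(\rho_{AB})=S(\Delta_A\Delta_B(\rho_{AB}))-S(\Delta_A(\rho_{AB}))=S(\Delta_A(\rho_{AB})||\Delta_A\Delta_B(\rho_{AB}))\geq C_r(\rho_B)$ by data processing under $\trace_A$, in analogy with Lemma 5 of the paper. The genuine gap is exactly where you flag it: the crux inequality $C^{A|B}_r(\rho_{AB})\geq C_r(\rho_A)+E_r(\rho_{AB})$ is never proven, and your sketch does not close it. Writing $\sigma^*$ for the optimal separable state and using your Pythagorean identity with $\tau=\Delta_A(\sigma^*)$ gives $C^{A|B}_r(\rho_{AB})=S(\rho_{AB}||\Delta_A(\sigma^*))-S(\Delta_A(\rho_{AB})||\Delta_A(\sigma^*))$; data processing under $\Delta_A$ controls the cross term as $S(\Delta_A(\rho_{AB})||\Delta_A(\sigma^*))\leq S(\rho_{AB}||\sigma^*)=E_r(\rho_{AB})$, but this term carries a \emph{minus} sign, so the best you extract is $C^{A|B}_r\geq S(\rho_{AB}||\Delta_A(\sigma^*))-E_r$, and lower-bounding the remaining relative entropy by $C_r(\rho_A)$ or by $E_r$ yields only $C_r(\rho_A)-E_r$ or $0$ --- you recover the maximum of the two bounds, never their sum. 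Note also that, since $C^{A|B}_r(\rho_{AB})-C_r(\rho_A)=\sum_ip_iS(\rho^B_i)-S(B|A)_\rho$ (the basis-dependent discord with measurement in the incoherent basis), your crux is equivalent to asserting that this discord-like quantity dominates $E_r$; that is strictly stronger than anything the paper or its references establish, and it is not a known standard fact.

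Your fallback does not repair this: ``pass to many copies and regularize'' presupposes the single-copy inequality applied to $\rho_{AB}^{\ot n}$, which is the very statement you lack --- regularization weakens the conclusion (replacing $E_r$ by $E^{\infty}_r$) but does not weaken the hypothesis you need. The paper's own proof sidesteps the difficulty entirely: it imports from Ref.~\cite{Bu2017c} the inequality $C_r(\rho_{AB})\geq C_r(\rho_A)+C_r(\rho_B)+E^{\infty}_r(\rho_{AB})$, which was proved there by smooth max/min-relative-entropy (asymptotic) techniques rather than by a single-copy argument, and then concludes using $C_r=C_d$ \cite{Winter2016} and $E^{\infty}_r\geq E_d$ \cite{HorodeckiRMP09}. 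As it stands, your proposal establishes only $C_d(\rho_{AB})\geq C_d(\rho_B)+\max\set{C_d(\rho_A),\,E_d(\rho_{AB})}$; to complete it you must either prove the crux inequality (or its discord reformulation) or, like the paper, cite the regularized inequality as an external input.
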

\begin{proof}
It has been proved in Ref. \cite{Bu2017c} that
\begin{eqnarray*}
C_r(\rho_{AB})\geq C_r(\rho_A)+C_r(\rho_B)+E^{\infty}_r(\rho_{AB}),
\end{eqnarray*}
where $E^{\infty}_r$ is the regularized relative entropy of entanglement \cite{Vedral1997,Vedral1998,Vollbrecht2001}.
Due to the equivalence between $C_r$ and $C_d$ \cite{Winter2016}
and the fact that $E^{\infty}_r\geq E_d$ \cite{HorodeckiRMP09}, we obtain the result.

\end{proof}

In tripartite systems, the monogamy relation of coherence has been considered for relative entropy of coherence $C_r$ and it has been shown in Refs. \cite{Yao2015,Kumar2017} that
 it  does not hold in general  for $C_r$. However,
the monogamy relation for IQ coherence measure $C^{A|B}_r$ has been established in Ref. \cite{Bu2017c}. Here,
we obtain the monogamy relation for $C^{A|B}_f$ in tripartite systems as follows.

\begin{prop}
Given a bipartite state $\rho_{ABC}\in\cD(\cH_A\ot\cH_B\ot\cH_C)$, then
\begin{eqnarray}
C^{AB|C}_f(\rho_{ABC})
\geq C^{A|BC}_f(\rho_{ABC})+C^{B|C}_f(\rho_{BC}),
\end{eqnarray}
which implies the following monogamy relation,
\begin{eqnarray}
C^{AB|C}_f(\rho_{ABC})
\geq C^{A|C}_f(\rho_{AC})+C^{B|C}_f(\rho_{BC}).
\end{eqnarray}
\end{prop}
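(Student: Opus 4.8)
The plan is to adapt the argument of Lemma~\ref{prop:cf2} to the tripartite setting, exploiting that the dephasing on the composite incoherent system $AB$ factorizes as $\Delta_A\ot\Delta_B$ and that these two dephasings act on distinct tensor factors and commute. First I would fix an optimal pure-state decomposition $\rho_{ABC}=\sum_k p_k\proj{\psi_k}_{ABC}$ realizing the minimum in the definition of $C^{AB|C}_f$, so that $C^{AB|C}_f(\rho_{ABC})=\sum_k p_k S(\Delta_A\ot\Delta_B(\proj{\psi_k}_{ABC}))$.

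Next I would expand each $\ket{\psi_k}_{ABC}=\sum_j\sqrt{\lambda_{k,j}}\ket{j}_A\ket{v_{k,j}}_{BC}$ in the reference basis $\set{\ket{j}_A}_j$ of $\cH_A$, with $\ket{v_{k,j}}_{BC}$ normalized and $\sum_j\lambda_{k,j}=1$. Applying $\Delta_A$ first and then $\Delta_B$ gives $\Delta_A\ot\Delta_B(\proj{\psi_k})=\sum_j\lambda_{k,j}\proj{j}_A\ot\Delta_B(\proj{v_{k,j}}_{BC})$; because the projectors $\proj{j}_A$ are mutually orthogonal this operator is block diagonal in $A$, so its entropy splits additively as $S(\Delta_A\ot\Delta_B(\proj{\psi_k}))=S(\Delta_A(\proj{\psi_k}))+\sum_j\lambda_{k,j}S(\Delta_B(\proj{v_{k,j}}_{BC}))$, where I use that $S(\Delta_A(\proj{\psi_k}))=-\sum_j\lambda_{k,j}\log\lambda_{k,j}$ since the $\ket{v_{k,j}}_{BC}$ are pure. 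Averaging against $p_k$ then yields $C^{AB|C}_f(\rho_{ABC})=\sum_k p_k S(\Delta_A(\proj{\psi_k}))+\sum_{k,j}p_k\lambda_{k,j}S(\Delta_B(\proj{v_{k,j}}_{BC}))$.

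It then remains to lower bound each of the two sums by the appropriate IQ coherence of formation. The family $\set{p_k,\proj{\psi_k}_{ABC}}$ is a pure-state decomposition of $\rho_{ABC}$, so the first sum is $\geq C^{A|BC}_f(\rho_{ABC})$ by definition. For the second sum I would verify that tracing out $A$ kills the off-diagonal terms, giving $\rho_{BC}=\Ptr{A}{\rho_{ABC}}=\sum_{k,j}p_k\lambda_{k,j}\proj{v_{k,j}}_{BC}$; hence $\set{p_k\lambda_{k,j},\proj{v_{k,j}}_{BC}}$ is a legitimate pure-state decomposition of $\rho_{BC}$ and the second sum is $\geq C^{B|C}_f(\rho_{BC})$. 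Combining the two bounds gives the first displayed inequality. Finally, the monogamy relation follows because discarding the subsystem $B$ from the quantum memory $BC$ is a quantum operation on the quantum side, so property (iv) of $C^{A|B}_f$ gives $C^{A|BC}_f(\rho_{ABC})\geq C^{A|C}_f(\rho_{AC})$ with $\rho_{AC}=\Ptr{B}{\rho_{ABC}}$.

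The main obstacle I anticipate is the second step: establishing the additive split of the entropy and, crucially, checking that $\set{p_k\lambda_{k,j},\proj{v_{k,j}}_{BC}}$ really is a decomposition of the reduced state $\rho_{BC}$. Both hinge on the orthogonality of $\set{\ket{j}_A}_j$, which is what makes $\Delta_A$ both diagonalize the $A$-register and annihilate the cross terms after the partial trace over $A$. Once this factorization is in hand, the rest is routine application of the definitions of $C^{A|BC}_f$ and $C^{B|C}_f$ and of monotonicity under partial trace on the quantum side.
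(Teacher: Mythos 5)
Your proof is correct and follows essentially the same route as the paper: the same expansion of each pure state in the reference basis of $\cH_A$, the same block-diagonal entropy splitting $S(\Delta_A\ot\Delta_B(\proj{\psi_k}))=S(\Delta_A(\proj{\psi_k}))+\sum_j\lambda_{k,j}S(\Delta_B(\proj{v_{k,j}}_{BC}))$, and the same use of monotonicity under partial trace on the quantum side (property (iv)) to pass from $C^{A|BC}_f$ to $C^{A|C}_f$. The only difference is organizational: the paper first proves the inequality for pure tripartite states and then lifts it to mixed states via the optimal decomposition together with convexity of $C^{A|BC}_f$ and $C^{B|C}_f$, whereas you fold both steps into a single pass by bounding the two sums arising from the optimal decomposition directly through the definitional minimizations, exactly in the style of the paper's proof of Lemma \ref{prop:cf2}.
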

\begin{proof}
For any pure tripartite state $\ket{\psi}_{ABC}=\sum_i\sqrt{p_i}\ket{i}_A\ot\ket{u_i}_{BC}$,
\begin{eqnarray*}
&&C^{AB|C}_f(\psi_{ABC})\\
&=&C^{AB|C}_r(\psi_{ABC})\\
&=&S(\Delta_A\ot\Delta_B(\rho_{AB}))\\
&=&S(\Delta_A(\rho_A))+\sum_ip_iS(\Delta_{B}(\rho^B_i))\\
&=&S(\Delta_A(\rho_A))+\sum_ip_iS(\Delta_{B}(\mathrm{Tr}_C\proj{u_i}_{BC}))\\
&\geq& C^{A|BC}_f(\psi_{ABC})+C^{B|C}_f(\rho_{BC}),
\end{eqnarray*}
where the last inequality results from the fact that $S(\Delta_A(\rho_A))=C^{A|BC}_f(\psi_{ABC})$ for pure state
$\psi_{ABC}$ and $\sum_ip_iS(\Delta_{B}(\mathrm{Tr}_C\proj{u_i}_{BC}))\leq C^{B|C}_f(\rho_{BC})$ due to the definition of
$C^{B|C}_f$. For any tripartite states $\rho_{ABC}$, there exists an optimal
pure state decomposition of $\rho_{ABC}=\sum_{i}\lambda_i\proj{\psi_i}_{ABC}$ such that
$C^{AB|C}_f(\rho_{ABC})=\sum_i\lambda_iC^{AB|C}_f(\proj{\psi_i}_{ABC})$. Thus,
\begin{eqnarray*}
C^{AB|C}_f(\rho_{ABC})&=&\sum_i\lambda_iC^{AB|C}_f(\proj{\psi_i}_{ABC})\\
&\geq& \sum_i\lambda_i[C^{A|BC}_f(\proj{\psi_i}_{ABC})+C^{B|C}_f(\rho^{BC}_i)]\\
&\geq& C^{A|BC}_f(\rho_{ABC})+C^{B|C}_f(\rho_{BC}).
\end{eqnarray*}

\end{proof}

Similar to coherence of formation $C_f$, coherence of assistance $C_a$ is also defined by taking the pure state decompositions of the given state \cite{Chitambar2016}. Here, we introduce the IQ  coherence of assistance
$C^{A|B}_a$ on bipartite systems, which is defined as follows
\begin{eqnarray}
\nonumber C^{A|B}_a(\rho_{AB}):&=&\max\sum_ip_iC^{A|B}_r(\proj{\psi_i}_{AB})\\
&=&\max\sum_ip_iS(\Delta_A(\proj{\psi_i}_{AB})),
\end{eqnarray}
where the maximization is taken over all pure state decomposition ${\rho_{AB}=\sum_ip_i\proj{\psi_i}_{AB}}$.
Following the similar method, we can obtain the relationship between coherence of assistance $C_a$ and entanglement of assistance
$E_a$ \cite{DiVincenzo1999} in bipartite systems as follows.
\begin{thm}\label{thm:bi_Ca}
Given a bipartite state $\rho_{AB}\in\cD(\cH_A\ot\cH_B)$, it holds that,
\begin{eqnarray*}
C^{A|B}_a(\rho_{AB})&\leq& C_a(\rho_A)+E_a(\rho_{AB}),\\
C_a(\rho_{AB})&\leq& C^{A|B}_a(\rho_{AB})+C_a(\rho_B),\\
C_a(\rho_{AB})&\leq& C_a(\rho_A)+C_a(\rho_B)+E_a(\rho_{AB}),
\end{eqnarray*}
where $E_a(\rho_{AB})=\max \sum_i p_iS(\Ptr{A}{\proj{\psi_i}_{AB}})$ with the maximization being taken over all
pure state decomposition of $\rho_{AB}=\sum_ip_i\proj{\psi_i}_{AB}$ and $\rho_A, \rho_B$ are the reduced states of $\rho_{AB}$ on subsystems A and B, respectively.
\end{thm}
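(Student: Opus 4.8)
The plan is to mirror the proofs of Lemmas~\ref{prop:cf1} and \ref{prop:cf2} and Theorem~\ref{thm:cf_ef}, reversing every inequality since $C_a$ and $E_a$ are defined by maximizing rather than minimizing over pure-state decompositions. The three bounds will follow from the single pure-state identity \eqref{prop:pure_cr}, which for a pure state reads $C^{A|B}_r(\proj{\psi}_{AB})=S(\Delta_A(\rho_A))=C_r(\rho_A)+S(\rho_B)$, together with two structural facts about $C_a$: first, that it is a concave roof, so that $\sum_i p_i C_a(\sigma_i)\leq C_a(\sum_i p_i\sigma_i)$; and second, that $C_r\leq C_a$, which holds since $C_r\leq C_f\leq C_a$ (the left inequality because $C_r$ is a convex function agreeing with $S(\Delta(\cdot))$ on pure states while $C_f$ is the convex roof of that same function, and the right because a minimum never exceeds the corresponding maximum over the same set).

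For the first bound I would fix a maximizing pure-state decomposition $\rho_{AB}=\sum_i p_i\proj{\psi_i}_{AB}$ for $C^{A|B}_a$ and write, using the pure-state identity,
\begin{eqnarray*}
C^{A|B}_a(\rho_{AB})=\sum_i p_i\Pa{C_r(\rho^A_i)+S(\rho^B_i)},
\end{eqnarray*}
where $\rho^A_i=\Ptr{B}{\proj{\psi_i}_{AB}}$ and $\rho^B_i=\Ptr{A}{\proj{\psi_i}_{AB}}$. The second summand is bounded by $E_a(\rho_{AB})$ directly from its definition, since $\set{p_i,\proj{\psi_i}_{AB}}$ is one admissible pure-state decomposition of $\rho_{AB}$. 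For the first summand I would apply $C_r(\rho^A_i)\leq C_a(\rho^A_i)$ term by term and then concavity of $C_a$ along the decomposition $\rho_A=\sum_i p_i\rho^A_i$ to obtain $\sum_i p_i C_r(\rho^A_i)\leq C_a(\rho_A)$, which gives the claimed inequality.

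For the second bound I would fix a maximizing pure-state decomposition $\rho_{AB}=\sum_i p_i\proj{\psi_i}_{AB}$ for $C_a(\rho_{AB})$ and expand each $\ket{\psi_i}_{AB}=\sum_j\sqrt{\lambda_{i,j}}\ket{j}_A\ket{u_{i,j}}_B$ in the reference basis of $A$, exactly as in Lemma~\ref{prop:cf2}; block-diagonality under $\Delta_A$ then yields
\begin{eqnarray*}
C_a(\rho_{AB})=\sum_i p_i S(\Delta_A(\rho^A_i))+\sum_{i,j}p_i\lambda_{i,j}S(\Delta_B(\proj{u_{i,j}}_B)).
\end{eqnarray*}
The first term is at most $C^{A|B}_a(\rho_{AB})$ because $\set{p_i,\proj{\psi_i}_{AB}}$ is admissible for the maximum defining $C^{A|B}_a$, and the second is at most $C_a(\rho_B)$ because $\set{p_i\lambda_{i,j},\proj{u_{i,j}}_B}$ is a pure-state decomposition of $\rho_B$. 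The third bound then follows by chaining the first two, exactly as Theorem~\ref{thm:cf_ef} combined its two lemmas.

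The main obstacle, and the only genuinely new ingredient beyond sign-flipping the formation argument, is the first bound: the convexity of $C_r$ that powered Lemma~\ref{prop:cf1} now points the wrong way, so I must instead invoke both $C_r\leq C_a$ and the concavity of $C_a$ to control $\sum_i p_i C_r(\rho^A_i)$ \emph{from above}. I would therefore make sure these two facts are either cited or given the one-line justifications sketched above, since they are what make the assistance inequality non-mechanical; the second and third bounds are then direct duals of the formation case.
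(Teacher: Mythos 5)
Your proof is correct and follows exactly the route the paper intends: the paper omits the proof of Theorem~\ref{thm:bi_Ca}, stating only that it is ``almost the same'' as that of Theorem~\ref{thm:cf_ef}, and your argument is the faithful dualization of Lemmas~\ref{prop:cf1} and \ref{prop:cf2} (particular decompositions now bound the maxima $C^{A|B}_a$, $C_a(\rho_B)$, $E_a$ from below, so the inequalities reverse). You are also right that this dualization is not purely mechanical: the first inequality cannot reuse the convexity of $C_r$, and your substitute --- $C_r\leq C_f\leq C_a$ together with concavity of the concave roof $C_a$, both of which you justify correctly --- is precisely the ingredient that the paper's one-line remark glosses over.
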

Theorem \ref{thm:bi_Ca}  illustrates that the total coherence of assistance
in bipartite systems is upper bounded by the sum of coherence of assistance
in each subsystem and the entanglement of formation between subsystems.
The proof of Theorem \ref{thm:bi_Ca} is almost the same as that of Theorem \ref{thm:cf_ef}, thus we omit it here.
The regularized version of coherence of assistance $C^\infty_a$ has also been proposed in
Ref. \cite{Chitambar2016}, which is defined as $C^\infty_a(\rho):=\lim_{n\to\infty}\frac{1}{n}C_a(\rho^{\ot n})=S(\Delta(\rho))$.
 Moreover, for any state
extension $\rho_{AB}$ of a given state $\rho_A$, i.e., $\Ptr{A}{\rho_{AB}}=\rho_A$,
$C^\infty_a(\rho_A)$ is a upper bound of $C^{A|B}_r(\rho_{AB})$. In fact,
$C^\infty_a(\rho_A)$ is the maximum value of $C^{A|B}_r(\rho_{AB})$ for the state extension $\rho_{AB}$
of $\rho_A$.
\begin{prop}
Given a quantum state $\rho_A\in\cD(\cH_A)$, then
\begin{eqnarray}
C^\infty_a(\rho_A)=\max_{\rho_{AB},\Ptr{B}{\rho_{AB}}=\rho_A}C^{A|B}_r(\rho_{AB}).
\end{eqnarray}
\end{prop}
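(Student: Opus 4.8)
The plan is to establish the two inequalities $C^\infty_a(\rho_A)\le \max_{\rho_{AB}}C^{A|B}_r(\rho_{AB})$ and $C^{A|B}_r(\rho_{AB})\le C^\infty_a(\rho_A)$ for every extension $\rho_{AB}$ of $\rho_A$, using throughout the identities $C^\infty_a(\rho_A)=S(\Delta_A(\rho_A))$ and $C^{A|B}_r(\rho_{AB})=S(\Delta_A(\rho_{AB}))-S(\rho_{AB})$ already recorded above. The first inequality is achievability (the maximum is attained), the second is the upper bound on every extension.

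For achievability I would simply exhibit an extension that saturates the bound. Taking any purification $\ket{\psi}_{AB}$ of $\rho_A$, so that $\Ptr{B}{\proj{\psi}_{AB}}=\rho_A$, relation \eqref{prop:pure_cr} gives $C^{A|B}_r(\psi_{AB})=C_r(\rho_A)+S(\rho_B)$. Since $\rho_A$ and $\rho_B$ have equal entropies for a pure bipartite state, $S(\rho_B)=S(\rho_A)$, and hence $C^{A|B}_r(\psi_{AB})=C_r(\rho_A)+S(\rho_A)=S(\Delta_A(\rho_A))=C^\infty_a(\rho_A)$. This shows the maximum is at least $C^\infty_a(\rho_A)$.

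The substance is the upper bound, for which I would expand the dephased entropy. Writing $p_i=\Innerm{i}{\rho_A}{i}$ and $\tilde\rho^B_i=\Innerm{i}{\rho_{AB}}{i}/p_i$ (the $A$-matrix elements), the state $\Delta_A(\rho_{AB})=\sum_i\proj{i}_A\ot p_i\tilde\rho^B_i$ is classical-quantum on $A$, so $S(\Delta_A(\rho_{AB}))=S(\Delta_A(\rho_A))+\sum_ip_iS(\tilde\rho^B_i)$ and therefore
\begin{eqnarray*}
C^{A|B}_r(\rho_{AB})=S(\Delta_A(\rho_A))+\sum_ip_iS(\tilde\rho^B_i)-S(\rho_{AB}).
\end{eqnarray*}
Thus it suffices to prove $\sum_ip_iS(\tilde\rho^B_i)\le S(\rho_{AB})$, which I expect to be the crux. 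I would handle it by purification: let $\ket{\Psi}_{ABR}$ purify $\rho_{AB}$, so that measuring $A$ in the reference basis with outcome $i$ leaves the pure state $\ket{i}_A\ket{\phi_i}_{BR}$ with probability $p_i$. Because $\ket{\phi_i}_{BR}$ is pure, $S(\tilde\rho^B_i)=S(\tilde\rho^R_i)$ with $\tilde\rho^R_i=\Ptr{B}{\proj{\phi_i}_{BR}}$, and since a measurement on $A$ does not alter the marginal of $R$ we have $\sum_ip_i\tilde\rho^R_i=\rho_R$. Concavity of the von Neumann entropy then yields $\sum_ip_iS(\tilde\rho^B_i)=\sum_ip_iS(\tilde\rho^R_i)\le S(\rho_R)=S(\rho_{AB})$, exactly what is needed. (Equivalently, one may apply the Araki--Lieb triangle inequality to the tripartite state $\Delta_A(\proj{\Psi}_{ABR})$, whose total entropy equals $S(\Delta_A(\rho_A))$.)

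Combining the two bounds gives $\max_{\rho_{AB}}C^{A|B}_r(\rho_{AB})=S(\Delta_A(\rho_A))=C^\infty_a(\rho_A)$. The only delicate point is the inequality $\sum_ip_iS(\tilde\rho^B_i)\le S(\rho_{AB})$; naive attempts via subadditivity or data processing only reproduce the known lower bound $C^{A|B}_r(\rho_{AB})\ge C_r(\rho_A)$, so passing to a purification of $\rho_{AB}$ and comparing the post-measurement $R$ states, rather than working with the $B$ states directly, is the step that makes the argument go through.
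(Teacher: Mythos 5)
Your proof is correct, and while its achievability step coincides with the paper's, the upper bound is argued along a genuinely different route. The paper's proof is very short given its earlier machinery: for achievability it likewise takes a purification $\psi_{AB}$ of $\rho_A$ and uses the pure-state formula $C^{A|B}_r(\psi_{AB})=S(\Delta_A(\rho_A))$; for the upper bound it purifies the mixed extension $\rho_{AB}$ to $\psi_{ABC}$ and invokes property (iv) — monotonicity of $C^{A|B}_r$ under CPTP maps on the $B$ side, here the partial trace over $C$ — to conclude $C^{A|B}_r(\rho_{AB})\leq C^{A|BC}_r(\psi_{ABC})=S(\Delta_A(\rho_A))=C^\infty_a(\rho_A)$. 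You also pass to a purification $\ket{\Psi}_{ABR}$, but instead of citing monotonicity as a black box (which itself rests on data processing for the relative entropy), you unpack the quantity: from $C^{A|B}_r(\rho_{AB})=S(\Delta_A(\rho_{AB}))-S(\rho_{AB})$ and the classical-quantum entropy decomposition you reduce the claim to $\sum_i p_i S(\tilde\rho^B_i)\leq S(\rho_{AB})$, which you prove by noting that the conditional states $\ket{\phi_i}_{BR}$ are pure (so $S(\tilde\rho^B_i)=S(\tilde\rho^R_i)$), that $\sum_i p_i\tilde\rho^R_i=\rho_R$, and then applying concavity of the entropy together with $S(\rho_R)=S(\rho_{AB})$; your Araki--Lieb reformulation applied to $\Delta_A(\proj{\Psi}_{ABR})$ is an equally valid packaging of the same inequality. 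What the paper's route buys is brevity, at the cost of leaning on a property it asserts but does not prove in the text; what your route buys is a self-contained argument from elementary entropy identities, which in addition exhibits exactly where the slack sits (the gap in the concavity step), and your closing remark correctly identifies the purification as the step that cannot be bypassed by subadditivity or data processing on $B$ alone.
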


\begin{proof}
First, for pure bipartite state $\psi_{AB}$ with $\Ptr{B}{\proj{\psi}_{AB}}=\rho_A$, then
$C^{A|B}_r(\psi_{AB})=S(\Delta_A(\rho_A))=C^\infty_a(\rho_A)$. Besides, for mixed bipartite state $\rho_{AB}$ with
$\Ptr{B}{\rho_{AB}}=\rho_A$, there exists a purification $\psi_{ABC}$ of $\rho_{AB}$ such that
$\Ptr{C}{\proj{\psi}_{ABC}}=\rho_{AB}$. Since $C^{A|B}_{r}$ is monotone under completely positive and trace preserving
(CPTP) maps on B side, then
$C^{A|B}_r(\rho_{AB})\leq C^{A|BC}_r(\psi_{ABC})=S(\Delta_A(\rho_A))=C^\infty_a(\rho_A)$.

\end{proof}

% %================================%
\section{$l_1$ norm of IQ coherence}
% %================================%
In order to introduce  $l_1$ norm of IQ coherence on bipartite systems, let us first introduce a new norm $\norm{\cdot}_{l_1\ot \mathrm{tr}}$ on $\cB(\cH_A\ot\cH_B)$ with a fixed basis $\set{\ket{i}_A}_i$ of $\cH_A$. For any operator $Q\in\cB(\cH_A\ot\cH_B)$ written as $Q=\sum^{d_A}_{i,j=1}\Prok{i}{j}_A\ot Q^B_{ij}$,
the norm $\norm{Q}_{l_1\ot \mathrm{tr}}$ is defined as follows,
\begin{eqnarray}
\norm{Q}_{l_1\ot \mathrm{tr}}:=\sum^{d_A}_{i,j=1}\norm{Q^B_{ij}}_{\mathrm{tr}},
\end{eqnarray}
where $\norm{A}_{\mathrm{tr}}=\Tr{\sqrt{A^\dag A}}$.
It is easy to show that $\norm{\cdot}_{l_1\ot \mathrm{tr}}$ is a norm, that is, it satisfies the following properties:
(i) Positivity, $\norm{Q}_{l_1\ot \mathrm{tr}}\geq 0$ and $\norm{Q}_{l_1\ot \mathrm{tr}}=0\Leftrightarrow Q=0$;
(ii) $\norm{\alpha Q}_{l_1\ot \mathrm{tr}}=|\alpha|\norm{Q}_{l_1\ot \mathrm{tr}}$ for any $\alpha\in\complex$;
(iii) Triangle inequality, $\norm{Q+P}_{l_1\ot \mathrm{tr}}\leq \norm{Q}_{l_1\ot \mathrm{tr}}+\norm{P}_{l_1\ot \mathrm{tr}}$ for any operators $Q,P\in\cB(\cH_A\ot\cH_B)$.

Based on this new norm, we define  $l_1$ norm of IQ coherence on bipartite systems as follows,
\begin{eqnarray}
\nonumber C^{A|B}_{l_1}(\rho_{AB})
&:=&\min_{\sigma_{A|B}\in\cI Q}\norm{\rho_{AB}-\sigma_{A|B}}_{l_1\ot \mathrm{tr}}\\
&=&\sum_{i\neq j}\norm{\rho^B_{ij}}_{\mathrm{tr}},
\end{eqnarray}
where $\rho^B_{ij}=\Innerm{i}{\rho_{AB}}{j}_A$.
Note that, $C^{A|B}_{l_1}$ satisfies the
properties (i)-(v), where
 the positivity of $C^{A|B}_{l_1}$ ( i.e., property (i) ) comes from the positivity
of the norm $\norm{\cdot}_{l_1\ot \mathrm{tr}}$, (iv) results from the contractivity  of $\norm{\cdot}_{\mathrm{tr}}$ under
CPTP maps, (v) comes from the
triangle inequality of the norm $\norm{\cdot}_{l_1\ot \mathrm{tr}}$, (iii) and (v) lead to the property (ii). Thus, we only need to prove (iii), which is presented
in the Appendix \ref{lem:str_mon}.

Due to the definition,  $C^{A|B}_{l_1}(\rho_{AB})\geq C_{l_1}(\rho_A)$ with $\rho_A$ being the reduced state of $\rho_{AB}$, which comes
from the fact that $\norm{\rho^B_{ij}}_{\mathrm{tr}}\geq \abs{\Tr{\rho^{B}_{ij}}}$.
If the subsystem $B$ is a trivial system, i.e., $dim \cH_B=1$, then $C^{A|B}_{l_1}(\rho_{AB})$ reduces to $C_{l_1}(\rho_A)$.
Besides, for bipartite pure state $\ket{\psi}_{AB}$, which can be written as $\ket{\psi}_{AB}=\sum^{d_A}_{i=1}\sqrt{p_i}\ket{i}_A\ket{u_i}_B$,
$C^{A|B}_{l_1}(\psi_{AB})=(\sum^{d_A}_{i=1}\sqrt{p_i})^2-1\leq d_A-1$.
Thus, the maximum value for $C^{A|B}_{l_1}$ is $d_A-1$ which does not depend on the subsystem B.

\begin{prop}
Given an bipartite state $\rho_{AB}\in\cD(\cH_A\ot\cH_B)$, then
\begin{eqnarray}\label{eq:sub_l1}
C_{l_1}(\rho_{AB})\geq C^{A|B}_{l_1}(\rho_{AB})+C_{l_1}(\rho_B),
\end{eqnarray}
where $\rho_B$ is the reduced state of $\rho_{AB}$.
\end{prop}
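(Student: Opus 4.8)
The plan is to expand the ordinary $l_1$ coherence $C_{l_1}(\rho_{AB})$ with respect to the product reference basis $\set{\ket{i}_A\ket{k}_B}_{i,k}$ and then separate the contributions according to whether the two $A$-indices coincide. Writing $\rho_{AB}=\sum_{i,j}\Prok{i}{j}_A\ot\rho^B_{ij}$ with $\rho^B_{ij}=\Innerm{i}{\rho_{AB}}{j}_A$, the matrix element of $\rho_{AB}$ between $\ket{i}_A\ket{k}_B$ and $\ket{j}_A\ket{l}_B$ equals $\bra{k}\rho^B_{ij}\ket{l}$. First I would split the defining sum as
\begin{eqnarray*}
C_{l_1}(\rho_{AB})&=&\sum_{i\neq j}\sum_{k,l}\abs{\bra{k}\rho^B_{ij}\ket{l}}\\
&&+\sum_i\sum_{k\neq l}\abs{\bra{k}\rho^B_{ii}\ket{l}},
\end{eqnarray*}
where the first sum collects the terms with distinct $A$-indices (which are off-diagonal in the product basis for \emph{every} $k,l$) and the second collects the terms with equal $A$-index but distinct $B$-index.

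For the first sum, the key observation is that for any operator $M\in\cB(\cH_B)$ one has $\sum_{k,l}\abs{\bra{k}M\ket{l}}\geq\norm{M}_{\mathrm{tr}}$. This follows from the triangle inequality for the trace norm applied to the decomposition $M=\sum_{k,l}\bra{k}M\ket{l}\out{k}{l}$, together with $\norm{\out{k}{l}}_{\mathrm{tr}}=1$. Applying this with $M=\rho^B_{ij}$ and summing over $i\neq j$ gives $\sum_{i\neq j}\sum_{k,l}\abs{\bra{k}\rho^B_{ij}\ket{l}}\geq\sum_{i\neq j}\norm{\rho^B_{ij}}_{\mathrm{tr}}=C^{A|B}_{l_1}(\rho_{AB})$.

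For the second sum, I would use that $\rho_B=\Ptr{A}{\rho_{AB}}=\sum_i\rho^B_{ii}$, so that $\bra{k}\rho_B\ket{l}=\sum_i\bra{k}\rho^B_{ii}\ket{l}$. The triangle inequality $\abs{\bra{k}\rho_B\ket{l}}\leq\sum_i\abs{\bra{k}\rho^B_{ii}\ket{l}}$, summed over $k\neq l$, yields $\sum_i\sum_{k\neq l}\abs{\bra{k}\rho^B_{ii}\ket{l}}\geq\sum_{k\neq l}\abs{\bra{k}\rho_B\ket{l}}=C_{l_1}(\rho_B)$. Adding the two lower bounds produces the claimed inequality \eqref{eq:sub_l1}. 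The main (and essentially only nontrivial) step is the trace-norm inequality $\sum_{k,l}\abs{\bra{k}M\ket{l}}\geq\norm{M}_{\mathrm{tr}}$; everything else is bookkeeping of indices combined with the ordinary triangle inequality, so I do not anticipate a serious obstacle once that lemma is in hand.
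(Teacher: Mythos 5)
Your proposal is correct and follows essentially the same route as the paper's own proof: the identical splitting of $C_{l_1}(\rho_{AB})$ in the product basis into blocks with equal and with distinct $A$-indices, the ordinary triangle inequality for the $A$-diagonal blocks against $\rho_B=\sum_i\rho^B_{ii}$, and the bound $\norm{M}_{\mathrm{tr}}\leq\sum_{k,l}\abs{\bra{k}M\ket{l}}$ for the $A$-off-diagonal blocks. The only (minor) difference is how that last inequality is established: you prove it directly from the triangle inequality for the trace norm applied to $M=\sum_{k,l}\bra{k}M\ket{l}\out{k}{l}$ with $\norm{\out{k}{l}}_{\mathrm{tr}}=1$, which is a somewhat more elementary argument than the paper's singular-value-decomposition and Cauchy--Schwarz proof of the corresponding lemma in its appendix.
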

\begin{proof}
For any bipartite state $\rho_{AB}=\sum_{i,j}\Prok{i}{j}_A\ot \rho^{B}_{ij}$,
 the reduced state $\rho_B$ can be written as $\rho_B=\sum_i\rho^{B}_{ii}$. Thus
\begin{eqnarray*}
C^{A|B}_{l_1}(\rho_{AB})&=&\sum_{i\neq j}\norm{\rho^B_{ij}}_{\mathrm{tr}},\\
C_{l_1}(\rho_B)&=&\norm{\sum_{i}\rho^B_{ii}}_{l_1}-1=\sum_{j\neq k}|\sum_i\Innerm{j}{\rho^B_{ii}}{k}_B|,\\
C_{l_1}(\rho_{AB})&=&\sum_{j\neq k}\sum_i|\Innerm{j}{\rho^B_{ii}}{k}_B|+\sum_{i\neq j}\norm{\rho^B_{ij}}_{l_1}.
\end{eqnarray*}
Since $\sum_{j\neq k}|\sum_i\Innerm{j}{\rho^B_{ii}}{k}_B|\leq \sum_{j\neq k}\sum_i|\Innerm{j}{\rho^B_{ii}}{k}_B|$
and $\sum_{i\neq j}\norm{\rho^B_{ij}}_{\mathrm{tr}}\leq \sum_{i\neq j}\norm{\rho^B_{ij}}_{l_1}$ ( See Lemma \ref{lem:11_vs_tr} in Appendix \ref{appen:l1_tr}), we get the result.

\end{proof}

This relation \eqref{eq:sub_l1} is stronger than the known result $C_{l_1}(\rho_{AB})\geq C_{l_1}(\rho_A)+C_{l_1}(\rho_B)$ as
$C^{A|B}_{l_1}(\rho_{AB})\geq C_{l_1}(\rho_A)$. Besides, $C^{A|B}_{l_1}$  contains not only the local coherence in subsystem A but also
the nonlocal correlation between A and B from the following proposition.

\begin{prop}
Given a bipartite state $\rho_{AB}\in\cD(\cH_A\ot\cH_B)$, then we have the following relationship,
\begin{eqnarray}
C^{A|B}_{l_1}(\rho_{AB})^2-C_{l_1}(\rho_A)^2\geq 2(\Tr{\rho^2_{AB}}-\Tr{\rho^2_A}).
\end{eqnarray}

\end{prop}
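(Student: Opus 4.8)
The plan is to pass to the block decomposition of $\rho_{AB}$ with respect to the reference basis of $A$, writing $\rho^B_{ij}=\Innerm{i}{\rho_{AB}}{j}_A$ so that $C^{A|B}_{l_1}(\rho_{AB})=\sum_{i\neq j}\norm{\rho^B_{ij}}_{\mathrm{tr}}=:A$ and, since $\Innerm{i}{\rho_A}{j}=\Tr{\rho^B_{ij}}$, $C_{l_1}(\rho_A)=\sum_{i\neq j}\abs{\Tr{\rho^B_{ij}}}=:B$. First I would rewrite the right-hand side in the same language: a direct expansion, using Hermiticity $\rho^B_{ji}=(\rho^B_{ij})^\dag$, gives $\Tr{\rho^2_{AB}}=\sum_{i,j}\Tr{\rho^B_{ij}(\rho^B_{ij})^\dag}=\sum_{i,j}\norm{\rho^B_{ij}}^2_2$ and $\Tr{\rho^2_A}=\sum_{i,j}\abs{\Tr{\rho^B_{ij}}}^2$, where $\norm{M}_2:=\sqrt{\Tr{M^\dag M}}$ is the Hilbert--Schmidt norm. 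Thus the target inequality becomes $A^2-B^2\geq 2\sum_{i,j}\pa{\norm{\rho^B_{ij}}^2_2-\abs{\Tr{\rho^B_{ij}}}^2}$.

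Next I would discard the diagonal blocks. Each $\rho^B_{ii}$ is positive semidefinite, so $\Tr{(\rho^B_{ii})^2}\leq\pa{\Tr{\rho^B_{ii}}}^2$; hence the $i=j$ terms on the right-hand side are nonpositive and may be dropped. It therefore suffices to establish $A^2-B^2\geq 2\sum_{i\neq j}\pa{\norm{\rho^B_{ij}}^2_2-\abs{\Tr{\rho^B_{ij}}}^2}$, a statement involving only off-diagonal blocks.

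The core step is to expand $A^2-B^2=\sum_{p,q}\pa{a_pa_q-b_pb_q}$, where $p,q$ range over ordered pairs $(i,j)$ with $i\neq j$ and $a_{(i,j)}=\norm{\rho^B_{ij}}_{\mathrm{tr}}$, $b_{(i,j)}=\abs{\Tr{\rho^B_{ij}}}$. I would single out two families of terms: the diagonal $q=p$, and the Hermitian partner $q=(j,i)$ of $p=(i,j)$ (distinct from $p$ since $i\neq j$). Because Hermiticity gives $a_{(j,i)}=a_{(i,j)}$ and $b_{(j,i)}=b_{(i,j)}$, these two families together contribute exactly $2\sum_{i\neq j}\pa{a^2_{(i,j)}-b^2_{(i,j)}}$, while every remaining cross term obeys $a_pa_q-b_pb_q\geq 0$ because $a_p\geq b_p\geq 0$ for all $p$. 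This yields $A^2-B^2\geq 2\sum_{i\neq j}\pa{a^2_{(i,j)}-b^2_{(i,j)}}$, and the proof closes by invoking $\norm{M}_{\mathrm{tr}}\geq\norm{M}_2$ (the sum of the singular values dominates their root-mean-square), so that $a^2_{(i,j)}\geq\norm{\rho^B_{ij}}^2_2$ termwise.

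The main obstacle is that the inequality cannot be proved block by block: the naive termwise bound $a^2+b^2\geq 2\norm{M}^2_2$ is false (take $M$ of rank one with $\Tr M=0$). The factor of two on the right-hand side must instead be recovered from the off-diagonal structure of $A^2$, specifically from the doubling produced by pairing each block with its Hermitian conjugate. Keeping the subtracted $B^2$ correctly matched against these paired terms --- rather than bounding $A^2$ and $B^2$ separately --- is the delicate point, since $B^2$ alone is not controlled by $2\sum_{i\neq j}b^2_{(i,j)}$.
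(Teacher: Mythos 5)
Your proof is correct and takes essentially the same route as the paper's: both reduce the right-hand side to $2\sum_{i\neq j}\left(\norm{\rho^B_{ij}}_{\mathrm{tr}}^2-\abs{\Tr{\rho^B_{ij}}}^2\right)$ by discarding the positive-semidefinite diagonal blocks and invoking the domination of the Hilbert--Schmidt norm by the trace norm, and both obtain the needed factor of two from the Hermitian symmetry $\rho^B_{ji}=(\rho^B_{ij})^\dag$ while dropping nonnegative cross terms. The only difference is bookkeeping: the paper sums over unordered pairs $i<j$ and uses the factorization $(A-B)(A+B)$ together with product-of-sums $\geq$ sum-of-products, whereas you sum over ordered pairs and match each block with its Hermitian partner --- the same inequality arranged differently.
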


\begin{proof}
Any bipartite state $\rho_{AB}$
can be written as $\rho_{AB}=\sum^{d_A}_{i,j=1}\Prok{i}{j}_A\ot \rho^B_{ij}$ and thus
\begin{eqnarray*}
C^{A|B}_{l_1}(\rho_{AB})=2\sum_{i<j}\Tr{|\rho^B_{ij}|},\\
C_{l_1}(\rho_A)=2\sum_{i<j}|\Tr{\rho^B_{ij}}|.
\end{eqnarray*}

Moreover, the term $\Tr{\rho^2_{AB}}-\Tr{\rho^2_A}$
has the following upper bound,
\begin{eqnarray*}
&&\Tr{\rho^2_{AB}}-\Tr{\rho^2_A}\\
&=&(\sum_i\Tr{|\rho^B_{ii}|^2}+2\sum_{i<j}\Tr{|\rho^B_{ij}|^2})\\
&&-(\sum_i\Tr{\rho^B_{ii}}^2+2\sum_{i<j}|\Tr{\rho^B_{ij}}|^2)\\
&\leq& 2\sum_{i<j}(\Tr{|\rho^B_{ij}|^2}-|\Tr{\rho^B_{ij}}|^2)\\
&\leq& 2\sum_{i<j}(\Tr{|\rho^B_{ij}|}^2-|\Tr{\rho^B_{ij}}|^2),
\end{eqnarray*}
where the first and the second inequalities come from the fact that $\Tr{|\rho^B_{ij}|^2}\leq \Tr{|\rho^B_{ij}|}^2$.
Therefore,
\begin{eqnarray*}
&&C^{A|B}_{l_1}(\rho_{AB})^2-C_{l_1}(\rho_A)^2\\
&=&4[(\sum_{i<j}\Tr{|\rho^B_{ij}|})^2-(\sum_{i<j}|\Tr{\rho^B_{ij}}|)^2]\\
&=&4[\sum_{i<j}(\Tr{|\rho^B_{ij}|}-|\Tr{\rho^B_{ij}}|)][\sum_{i<j}(\Tr{|\rho^B_{ij}|}+|\Tr{\rho^B_{ij}}|)]\\
&\geq& 4 \sum_{i<j} [\Tr{|\rho^B_{ij}|}^2-|\Tr{\rho^B_{ij}}|^2]\\
&\geq& 2(\Tr{\rho^2_{AB}}-\Tr{\rho^2_A}),
\end{eqnarray*}
where the first inequality comes directly from the fact that $\Tr{|\rho^B_{ij}|}\leq \Tr{|\rho^B_{ij}|}$  and the
second inequality comes from the upper bound of $\Tr{\rho^2_{AB}}-\Tr{\rho^2_A}$.

\end{proof}

The term $\Tr{\rho^2_{AB}}-\Tr{\rho^2_A}$ quantifies the entanglement between A and B as
\begin{eqnarray}\label{ineq:ent}
\Tr{\rho^2_{AB}}-\Tr{\rho^2_A}>0
\end{eqnarray}
only if $\rho_{AB}$ is entangled \cite{Horodecki1996} and the inequality \eqref{ineq:ent}
provides a powerful tool in the detection of entanglement in  experiments \cite{Bovino2005,Islam2015}.
Thus, the above proposition implies that the total coherence in bipartite system  quantified by $l_1$ norm consists of the  nonlocal correlation between A and B and
the local coherence $C_{l_1}(\rho_A)$ and $C_{l_1}(\rho_B)$.
Furthermore,  we obtain the monogamy relation of $C^{A|B}_{l_1}$ in tripartite systems,
which clarifies the distribution of coherence by $l_1$ norm in multipartite systems.
\begin{prop}
Given a tripartite state $\rho_{ABC}\in\cD(\cH_A\ot\cH_B\ot\cH_C)$, then
\begin{eqnarray}
C^{AB|C}_{l_1}(\rho_{ABC})\geq C^{A|BC}_{l_1}(\rho_{ABC})+C^{B|C}_{l_1}(\rho_{BC}),
\end{eqnarray}
which implies the following monogamy relation,
\begin{eqnarray}
C^{AB|C}_{l_1}(\rho_{ABC})\geq C^{A|C}_{l_1}(\rho_{AC})+C^{B|C}_{l_1}(\rho_{BC}),
\end{eqnarray}
where $\rho_{AC}, \rho_{BC}$ are the corresponding reduced states of $\rho_{ABC}$.

\end{prop}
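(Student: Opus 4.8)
The plan is to expand all three $l_1$-norm IQ coherences directly in the product reference basis of $\cH_A\ot\cH_B$ and then compare them block by block, the only analytic ingredients being the triangle inequality for $\norm{\cdot}_{\mathrm{tr}}$ and the identity $\norm{\Prok{k}{l}_B\ot X}_{\mathrm{tr}}=\norm{X}_{\mathrm{tr}}$. First I would set $\rho^C_{ik,jl}:=\Innerm{ik}{\rho_{ABC}}{jl}_{AB}\in\cB(\cH_C)$, with $i,j$ labeling the basis of $\cH_A$ and $k,l$ that of $\cH_B$. Reading off the definition of $C^{A|B}_{l_1}$ for each bipartition then gives $C^{AB|C}_{l_1}(\rho_{ABC})=\sum_{(i,k)\neq(j,l)}\norm{\rho^C_{ik,jl}}_{\mathrm{tr}}$; next $C^{A|BC}_{l_1}(\rho_{ABC})=\sum_{i\neq j}\norm{\rho^{BC}_{ij}}_{\mathrm{tr}}$ with $\rho^{BC}_{ij}=\sum_{k,l}\Prok{k}{l}_B\ot\rho^C_{ik,jl}$; and, since $\rho_{BC}=\Ptr{A}{\rho_{ABC}}$ has $B$-blocks $\sum_i\rho^C_{ik,il}$, finally $C^{B|C}_{l_1}(\rho_{BC})=\sum_{k\neq l}\norm{\sum_i\rho^C_{ik,il}}_{\mathrm{tr}}$.

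The heart of the argument is to split the off-diagonal index set of the $AB|C$ sum: the pair $(i,k)\neq(j,l)$ occurs exactly when either $i\neq j$ with $k,l$ arbitrary, or $i=j$ and $k\neq l$, so that
\[
C^{AB|C}_{l_1}(\rho_{ABC})=\sum_{i\neq j}\sum_{k,l}\norm{\rho^C_{ik,jl}}_{\mathrm{tr}}+\sum_{i}\sum_{k\neq l}\norm{\rho^C_{ik,il}}_{\mathrm{tr}}.
\]
For the first group I would use the triangle inequality on the block expansion of $\rho^{BC}_{ij}$, namely $\norm{\rho^{BC}_{ij}}_{\mathrm{tr}}\leq\sum_{k,l}\norm{\Prok{k}{l}_B\ot\rho^C_{ik,jl}}_{\mathrm{tr}}=\sum_{k,l}\norm{\rho^C_{ik,jl}}_{\mathrm{tr}}$, whence the first group is at least $C^{A|BC}_{l_1}(\rho_{ABC})$. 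For the second group I would use the triangle inequality on the sum over $i$, $\norm{\sum_i\rho^C_{ik,il}}_{\mathrm{tr}}\leq\sum_i\norm{\rho^C_{ik,il}}_{\mathrm{tr}}$, whence the second group is at least $C^{B|C}_{l_1}(\rho_{BC})$. Adding the two bounds gives the first displayed inequality of the proposition.

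I expect the main obstacle to be bookkeeping rather than analysis: one must check that the two groups in the split are disjoint and together exhaust precisely the off-diagonal $AB$ pairs, and that the two triangle-inequality steps decrease the right-hand side in the same direction — the first replaces a partial sum of $C$-block norms by a single fused $BC$-block norm, the second replaces a sum of $C$-block norms by the norm of the $A$-collapsed block that defines the reduced state $\rho_{BC}$. For the implied monogamy relation I would then invoke property (iv): since tracing out $B$ is a CPTP map on the quantum side, $C^{A|BC}_{l_1}(\rho_{ABC})\geq C^{A|C}_{l_1}(\Ptr{B}{\rho_{ABC}})=C^{A|C}_{l_1}(\rho_{AC})$ — equivalently the contractivity of $\norm{\cdot}_{\mathrm{tr}}$ under the partial trace $\Ptr{B}{\rho^{BC}_{ij}}$ applied blockwise. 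Substituting this into the first inequality yields $C^{AB|C}_{l_1}(\rho_{ABC})\geq C^{A|C}_{l_1}(\rho_{AC})+C^{B|C}_{l_1}(\rho_{BC})$.
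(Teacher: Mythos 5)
Your proposal is correct and follows essentially the same route as the paper's own proof: both expand all three quantities in the product reference basis of $\cH_A\ot\cH_B$, split the off-diagonal index set $(i,k)\neq(j,l)$ into the two disjoint groups $\{i\neq j\}$ and $\{i=j,\,k\neq l\}$, and bound each group from below using the triangle inequality for $\norm{\cdot}_{\mathrm{tr}}$ together with $\norm{\Prok{k}{l}_B\ot X}_{\mathrm{tr}}=\norm{X}_{\mathrm{tr}}$. Your explicit appeal to property (iv) for the final monogamy implication is exactly the justification the paper leaves implicit.
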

\begin{proof}
Any tripartite state $\rho_{ABC}$ can be written  as $\rho_{ABC}=\sum_{i,j}\sum_{m,n}\Prok{i}{j}_A\ot \Prok{m}{n}_B\ot\rho^C_{ij,mn}$
with the local basis $\set{\ket{i}_A}_i$ and $\set{\ket{m}_B}_m$ of $\cH_A$ and $\cH_B$. Then the reduced state $\rho_{BC}=\sum_i\sum_{m,n}\Prok{m}{n}_B\ot \rho^C_{ii,mn}$. Thus
\begin{eqnarray*}
C^{AB|C}_{l_1}(\rho_{ABC})&=&\sum_{(i,m)\neq (j,n)}\norm{\rho^C_{ij,mn}}_{\mathrm{tr}},\\
C^{A|BC}_{l_1}(\rho_{ABC})&=&\sum_{i\neq j}\norm{\sum_{m,n}\Prok{m}{n}_B\ot\rho^C_{ij,mn}}_{\mathrm{tr}}\\
&\leq& \sum_{i\neq j}\sum_{m,n}\norm{\Prok{m}{n}_B\ot\rho^C_{ij,mn}}_{\mathrm{tr}}\\
&=&\sum_{i\neq j}\sum_{m,n}\norm{\rho^C_{ij,mn}}_{\mathrm{tr}},\\
C^{B|C}_{l_1}(\rho_{BC})&=&\sum_{m\neq n}\norm{\sum_{i}\rho^C_{ii,mn}}_{\mathrm{tr}}
\leq \sum_i\sum_{m\neq n}\norm{\rho^C_{ii,mn}}_{\mathrm{tr}},
\end{eqnarray*}
where $(i,m)\neq (j,n)$ means $i\neq j$ or $m\neq n$. Therefore, we get the result.

\end{proof}

Now, let us consider the relationship between $C^{A|B}_{l_1}$ and $C^{A|B}_{\max}$, where
we find that $C^{A|B}_{l_1}$ is closely related to $C^{A|B}_{\max}$ and
they are equal for certain type of bipartite states.
\begin{prop}\label{prop:l1vsMax}
Given a bipartite state $\rho_{AB}\in\cD(\cH_A\ot\cH_B)$, then
\begin{eqnarray}
\nonumber 1+\frac{1}{d_A-1}C^{A|B}_{l_1}(\rho_{AB})\leq 2^{C^{A|B}_{\max}(\rho_{AB})}\leq 1+C^{A|B}_{l_1}(\rho_{AB}),\\
\end{eqnarray}
where $d_A$ is the dimension of $\cH_A$.
\end{prop}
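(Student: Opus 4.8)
The plan is to reduce the statement to a semidefinite comparison and then treat the two inequalities separately. First I would record that $2^{C^{A|B}_{\max}(\rho_{AB})}=\min\set{t\ge 0:\rho_{AB}\le t\,\sigma_{A|B},\ \sigma_{A|B}\in\cI Q}$, and that an operator lies in $\cI Q$ precisely when it is block-diagonal in the reference basis of $A$, i.e. $\sigma_{A|B}=\sum_i\Prok{i}{i}_A\ot\sigma^B_i$ with $\sigma^B_i\ge 0$ and $\sum_i\Tr{\sigma^B_i}=1$ (obtained by collecting the diagonal $A$-parts of $\sum_k p_k\sigma^A_k\ot\tau^B_k$). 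Writing $\rho_{AB}=\sum_{i,j}\Prok{i}{j}_A\ot\rho^B_{ij}$ with $\rho^B_{ij}=\Innerm{i}{\rho_{AB}}{j}_A$, the whole proposition becomes a statement about the block-operator matrix $(\rho^B_{ij})_{ij}$.

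For the upper bound I would exhibit an explicit feasible $\sigma_{A|B}$. Taking the singular value decomposition $\rho^B_{ij}=\sum_\alpha s^{ij}_\alpha\Prok{e^{ij}_\alpha}{f^{ij}_\alpha}$ of each off-diagonal block with $i<j$, so that $\sum_\alpha s^{ij}_\alpha=\norm{\rho^B_{ij}}_{\mathrm{tr}}$, I form the positive operator
\[ G=\sum_{i<j}\sum_\alpha s^{ij}_\alpha\,\bigl(\ket{i}_A\ket{e^{ij}_\alpha}_B-\ket{j}_A\ket{f^{ij}_\alpha}_B\bigr)\bigl(\bra{i}_A\bra{e^{ij}_\alpha}_B-\bra{j}_A\bra{f^{ij}_\alpha}_B\bigr)\ge 0. \]
By construction the $(i,j)$ off-diagonal block of $G$ equals $-\rho^B_{ij}$, so $D:=\rho_{AB}+G$ is block-diagonal in $A$ and positive (a sum of two positive operators); moreover each rank-one term contributes $2s^{ij}_\alpha$ to the trace, giving $\Tr{D}=1+\sum_{i\neq j}\norm{\rho^B_{ij}}_{\mathrm{tr}}=1+C^{A|B}_{l_1}(\rho_{AB})$. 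Then $\sigma_{A|B}:=D/\Tr{D}\in\cI Q$ satisfies $\rho_{AB}\le D=(1+C^{A|B}_{l_1})\sigma_{A|B}$, whence $2^{C^{A|B}_{\max}}\le 1+C^{A|B}_{l_1}$.

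For the lower bound I would start from any feasible pair, $W:=t\,\sigma_{A|B}-\rho_{AB}\ge 0$ with $\sigma_{A|B}=\sum_i\Prok{i}{i}_A\ot\sigma^B_i$. Positivity of the principal $2\times 2$ block in the $A$-indices $i,j$ yields a positive block-operator matrix with diagonal blocks $t\sigma^B_i-\rho^B_{ii}$, $t\sigma^B_j-\rho^B_{jj}$ and off-diagonal block $-\rho^B_{ij}$. The elementary fact that a positive block matrix $\left(\begin{smallmatrix}X&Y\\Y^\dagger&Z\end{smallmatrix}\right)$ obeys $\norm{Y}_{\mathrm{tr}}\le\sqrt{\Tr{X}\Tr{Z}}$ (write $Y=X^{1/2}KZ^{1/2}$ with $\norm{K}\le1$ and use $\norm{AB}_{\mathrm{tr}}\le\norm{A}_2\norm{B}_2$) then gives $\norm{\rho^B_{ij}}_{\mathrm{tr}}\le\sqrt{a_ia_j}$, where $a_i:=\Tr{t\sigma^B_i-\rho^B_{ii}}\ge0$ and $\sum_i a_i=t-1$. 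Summing over $i\neq j$ and applying Cauchy--Schwarz, $\sum_{i\neq j}\sqrt{a_ia_j}=(\sum_i\sqrt{a_i})^2-\sum_i a_i\le(d_A-1)\sum_i a_i$, produces $C^{A|B}_{l_1}(\rho_{AB})\le(d_A-1)(t-1)$; taking $t=2^{C^{A|B}_{\max}}$ rearranges to the claimed lower bound.

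The main obstacle is the upper-bound witness: one must guess the right positive operator $G$ that simultaneously cancels all off-diagonal blocks of $\rho_{AB}$ and costs exactly $C^{A|B}_{l_1}$ in trace, and the block-wise singular value decomposition together with the ``difference'' vectors $\ket{i}_A\ket{e^{ij}_\alpha}_B-\ket{j}_A\ket{f^{ij}_\alpha}_B$ is what makes both requirements hold at once. The lower bound is comparatively routine once the $2\times2$ positive-block trace-norm inequality is in hand; the factor $d_A-1$ enters solely through the Cauchy--Schwarz step and is saturated when all $a_i$ are equal.
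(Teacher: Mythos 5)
Your proposal is correct, and it splits into one half that coincides with the paper's proof and one half that takes a genuinely different route. For the upper bound, your operator $D=\rho_{AB}+G$ is in fact \emph{identical} to the paper's candidate $(1+C^{A|B}_{l_1}(\rho_{AB}))\sigma_{A|B}$: writing $\rho^B_{ij}=\sum_\alpha s^{ij}_\alpha\Prok{e^{ij}_\alpha}{f^{ij}_\alpha}$, the diagonal blocks your $G$ deposits at positions $(i,i)$ and $(j,j)$ are $\sum_\alpha s^{ij}_\alpha\proj{e^{ij}_\alpha}=\abs{\rho^B_{ji}}$ and $\sum_\alpha s^{ij}_\alpha\proj{f^{ij}_\alpha}=\abs{\rho^B_{ij}}$, exactly the blocks of the paper's $\sigma_{A|B}$; the only difference is that you certify positivity by exhibiting an explicit rank-one decomposition, whereas the paper invokes the identity $\proj{i}_A\ot\abs{\rho^B_{ji}}+\proj{j}_A\ot\abs{\rho^B_{ij}}=\abs{\Prok{i}{j}_A\ot\rho^B_{ij}+\Prok{j}{i}_A\ot\rho^B_{ji}}$ together with $\abs{X}+X\geq0$ for Hermitian $X$ --- your verification is more self-contained. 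For the lower bound the routes genuinely diverge: the paper constructs a dual witness $M=\mathbb{I}_{AB}+\tfrac{1}{d_A-1}\sum_{i<j}(\Prok{i}{j}_A\ot U^{B\dag}_{ij}+\Prok{j}{i}_A\ot U^B_{ij})$ from the polar unitaries of the off-diagonal blocks, checks $M\geq0$ by the same absolute-value identity, and sandwiches $\Tr{M\rho_{AB}}\leq 2^{C^{A|B}_{\max}(\rho_{AB})}\Tr{M\tau_{A|B}}$, using that $\Tr{M\tau_{A|B}}=1$ because the off-diagonal $A$-blocks of any $\cI Q$ state vanish. You instead argue on the primal side: your block-diagonal characterization of $\cI Q$ and the feasibility formulation of $2^{C^{A|B}_{\max}}$ are both correct, the $2\times 2$ compression of $t\sigma_{A|B}-\rho_{AB}\geq 0$ is positive, the trace-norm bound $\norm{Y}_{\mathrm{tr}}\leq\sqrt{\Tr{X}\Tr{Z}}$ for positive block matrices holds by the stated factorization $Y=X^{1/2}KZ^{1/2}$, and the Cauchy--Schwarz step $\sum_{i\neq j}\sqrt{a_ia_j}\leq(d_A-1)\sum_i a_i$ with $\sum_i a_i=t-1$ closes the argument. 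Each approach buys something: the paper's witness $M$ is an explicit observable certifying the bound (in the spirit of the semidefinite duality the authors use elsewhere for $C^{A|B}_{\max}$), while your primal argument needs no inspired guess, makes transparent that the factor $d_A-1$ enters only through Cauchy--Schwarz, and identifies when it saturates (all $a_i$ equal).
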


\begin{proof}
Any bipartite state $\rho_{AB}$ can be written as $\rho_{AB}=\sum_{i,j}\Prok{i}{j}_A\ot\rho^B_{ij}$, then
$C^{A|B}_{l_1}(\rho_{AB})=\sum_{i\neq j}\norm{\rho^B_{ij}}_{\mathrm{tr}}=\sum_{i,j}\norm{\rho^B_{ij}}_{\mathrm{tr}}-1$.
For $\rho^B_{ij}$ with $i<j$, there exists a unitary $U^B_{ij}$ such that
$U^B_{ij}\rho^B_{ij}=|\rho^B_{ij}|$, and thus
$\rho^B_{ji}U^{B\dag}_{ij}=|\rho^B_{ij}|$ as $(\rho^B_{ij})^\dag=\rho^B_{ji}$.
Now, let us take the  positive operator $M$ as follows,
\begin{eqnarray*}
M&=&\frac{1}{d_A-1}\sum_{i<j}[\proj{i}_A\ot \mathbb{I}_B+\proj{j}_A\ot \mathbb{I}_B\\
&&~~~~~~~+\Prok{i}{j}_A\ot U^{B\dag}_{ij}+\Prok{j}{i}_A\ot U^B_{ij}]\\
&=&\mathbb{I}_{AB}+\frac{1}{d_A-1}[\Prok{i}{j}_A\ot U^{B\dag}_{ij}+\Prok{j}{i}_A\ot U^B_{ij}],
\end{eqnarray*}
where the positivity of $M$ comes from the fact that
$\proj{i}_A\ot \mathbb{I}_B+\proj{j}_A\ot \mathbb{I}_B=|\Prok{i}{j}_A\ot U^{B,\dag}_{ij}+\Prok{j}{i}_A\ot U^B_{ij}|$ and the
fact that $|X|+X\geq 0$ for any Hermitian operator $X$.

Due to the definition of $C^{A|B}_{\max}$, there exists an incoherent-quantum state
$\tau_{A|B}$ such that
\begin{eqnarray*}
\rho_{AB}\leq 2^{C^{A|B}_{\max}(\rho_{AB})}\tau_{A|B}.
\end{eqnarray*}
Thus
\begin{eqnarray*}
\Tr{M\rho_{AB}}\leq 2^{C^{A|B}_{\max}(\rho_{AB})}\Tr{M\tau_{A|B}},
\end{eqnarray*}
which leads to
\begin{eqnarray*}
1+\frac{1}{d_A-1}C^{A|B}_{l_1}(\rho_{AB})\leq 2^{C^{A|B}_{\max}(\rho_{AB})}.
\end{eqnarray*}

Besides, let us take the incoherent-quantum state $\sigma_{A|B}$ to be
\begin{eqnarray*}
\sigma_{A|B}&=&\frac{1}{1+C^{A|B}_{l_1}(\rho_{AB})}[\sum_{i}\proj{i}_A\ot \rho^B_{ii}\\
&&+\sum_{i<j}\proj{i}_A\ot |\rho^B_{ji}|+\proj{j}_A\ot|\rho^B_{ij}|].\\
\end{eqnarray*}
Then $\rho_{AB}\leq (1+C^{A|B}_{l_1}(\rho_{AB}))\sigma_{A|B}$, as
\begin{eqnarray*}
&& (1+C^{A|B}_{l_1}(\rho_{AB}))\sigma_{A|B}-\rho_{AB}\\
 &=&\sum_{i<j}[\proj{i}_A\ot |\rho^B_{ji}|+\proj{j}_A\ot|\rho^B_{ij}|\\
&& -\Prok{i}{j}_A\ot\rho^B_{ij}-\Prok{j}{i}_A\ot\rho^B_{ji}]\\
&\geq&0,
\end{eqnarray*}
where the inequality comes from the fact that
$\proj{i}_A\ot |\rho^B_{ji}|+\proj{j}_A\ot|\rho^B_{ij}|=\abs{\Prok{i}{j}_A\ot\rho^B_{ij}+\Prok{j}{i}_A\ot\rho^B_{ji}}$ for $i\neq j$.

\end{proof}

\begin{prop}\label{prop:eq_l1_max}
Given a bipartite state $\rho_{AB}\in\cD(\cH_A\ot\cH_B)$, if there exists a unitary operator
$U_{AB}=\sum_{i}\proj{i}_A\ot U^B_{i}$ such that
$U^B_i\rho^{B}_{ij}U^{B,\dag}_j=|\rho^B_{ij}|$ for any i,j, then
\begin{eqnarray}\label{eq:l1_max}
C^{A|B}_{\max}(\rho_{AB})= \log(1+C^{A|B}_{l_1}(\rho_{AB})),
\end{eqnarray}
 where $|P|$ is defined as $|P|=\sqrt{P^\dag P}$.

\end{prop}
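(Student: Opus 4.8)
The plan is to combine the upper bound already proved in Proposition \ref{prop:l1vsMax} with a matching lower bound that is available precisely because of the alignment hypothesis. Since Proposition \ref{prop:l1vsMax} gives $2^{C^{A|B}_{\max}(\rho_{AB})}\leq 1+C^{A|B}_{l_1}(\rho_{AB})$ with no extra assumption, it suffices to establish the reverse inequality $2^{C^{A|B}_{\max}(\rho_{AB})}\geq 1+C^{A|B}_{l_1}(\rho_{AB})$ under the hypothesis that a single block-diagonal unitary $U_{AB}=\sum_i\proj{i}_A\ot U^B_i$ simultaneously aligns all blocks, i.e. $U^B_i\rho^B_{ij}U^{B\dag}_j=\abs{\rho^B_{ij}}$. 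Taking the adjoint of this relation and using $\rho^B_{ji}=(\rho^B_{ij})^\dag$ also forces the consistency $\abs{\rho^B_{ij}}=\abs{\rho^B_{ji}}$, which I will need below.

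For the lower bound I would run the standard witness argument for $D_{\max}$: if $\sigma_{A|B}\in\cI Q$ is optimal so that $\rho_{AB}\leq 2^{C^{A|B}_{\max}(\rho_{AB})}\sigma_{A|B}$, then for every positive operator $M\geq 0$ we have $\Tr{M\rho_{AB}}\leq 2^{C^{A|B}_{\max}(\rho_{AB})}\Tr{M\sigma_{A|B}}\leq 2^{C^{A|B}_{\max}(\rho_{AB})}\max_{\sigma\in\cI Q}\Tr{M\sigma}$. The whole point is to produce one positive $M$ for which the normalization $\max_{\sigma\in\cI Q}\Tr{M\sigma}$ equals $1$ while $\Tr{M\rho_{AB}}$ equals exactly $1+C^{A|B}_{l_1}(\rho_{AB})$. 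The candidate I would take is
\[
M=U^\dag_{AB}\,(d_A\proj{+}_A\ot\mathbb{I}_B)\,U_{AB}=\sum_{i,j}\Prok{i}{j}_A\ot U^{B\dag}_iU^B_j,
\]
where $\ket{+}_A=\tfrac{1}{\sqrt{d_A}}\sum_i\ket{i}_A$; positivity of $M$ is immediate, being a unitary conjugate of a positive operator. Its diagonal blocks are $\Innerm{i}{M}{i}_A=U^{B\dag}_iU^B_i=\mathbb{I}_B$, and since the extreme points of $\cI Q$ are $\proj{i}_A\ot\proj{\phi}_B$ we get $\max_{\sigma\in\cI Q}\Tr{M\sigma}=\max_i\norm{\Innerm{i}{M}{i}_A}_{\infty}=\norm{\mathbb{I}_B}_{\infty}=1$. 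Writing $\rho_{AB}=\sum_{i,j}\Prok{i}{j}_A\ot\rho^B_{ij}$, the hypothesis collapses each term $\Tr{U^{B\dag}_iU^B_j\rho^B_{ji}}$ to $\Tr{\abs{\rho^B_{ij}}}$, so $\Tr{M\rho_{AB}}=\sum_{i,j}\Tr{\abs{\rho^B_{ij}}}=1+C^{A|B}_{l_1}(\rho_{AB})$, the diagonal terms contributing $\Tr{\rho_{AB}}=1$. Feeding these two values into the witness inequality yields $2^{C^{A|B}_{\max}(\rho_{AB})}\geq 1+C^{A|B}_{l_1}(\rho_{AB})$, and with the upper bound this gives the claimed equality.

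The main obstacle is conceptual rather than computational: one must recognize that the alignment hypothesis is exactly what upgrades the suboptimal witness of Proposition \ref{prop:l1vsMax}, which carries the factor $\tfrac{1}{d_A-1}$ needed to keep a sum of pairwise witnesses positive, to the single coherent witness $d_A\proj{+}_A\ot\mathbb{I}_B$ rotated by $U_{AB}$. Because a \emph{single} unitary orients every off-diagonal block in the same direction, this witness picks up the full weight of each $\abs{\rho^B_{ij}}$ while keeping every diagonal block equal to $\mathbb{I}_B$, which is what saturates the upper bound. A secondary point requiring care is the identification $\max_{\sigma\in\cI Q}\Tr{M\sigma}=\max_i\norm{\Innerm{i}{M}{i}_A}_\infty$, which follows from the extreme-point structure of $\cI Q$ and the linearity of $\Tr{M\,\cdot\,}$.
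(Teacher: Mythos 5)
Your proposal is correct and is essentially the paper's own argument in dual form: the witness $M=U^\dag_{AB}\,(d_A\proj{+}_A\ot\mathbb{I}_B)\,U_{AB}$ applied to the operator inequality $\rho_{AB}\le 2^{C^{A|B}_{\max}(\rho_{AB})}\sigma_{A|B}$ is exactly the paper's step of conjugating by $U_{AB}$, tracing out $B$, and sandwiching with $\ket{+}$, and both arguments close by invoking Proposition \ref{prop:l1vsMax} for the reverse inequality. The only cosmetic difference is the normalization: you bound $\Tr{M\sigma_{A|B}}$ by $\max_{\sigma\in\cI Q}\Tr{M\sigma}=1$ via the extreme points $\proj{i}_A\ot\proj{\phi}_B$ of $\cI Q$, whereas the paper computes $\Innerm{+}{\sigma_A}{+}=1/d_A$ directly from the diagonality of the reduced state $\sigma_A$.
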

\begin{proof}
Set $\lambda=2^{C^{A|B}_{\max}(\rho_{AB})}$. Due to the definition of
$C^{A|B}_{\max}$, there exists a state $\sigma_{A|B}\in\cI Q$ such that
\begin{eqnarray*}
\rho_{AB}\leq 2^{\lambda}\sigma_{A|B}.
\end{eqnarray*}
Applying the unitary operation
$U_{AB}(\cdot) U^\dag_{AB}$ on both sides of the above equation and the taking the partial trace on part B, one obtains
\begin{eqnarray*}
\sum_{i,j}\norm{\rho^B_{ij}}_{\mathrm{tr}}\Prok{i}{j}_A\leq 2^{\lambda}\sigma_A,
\end{eqnarray*}
where $\sigma_A$ is the reduced state of $\sigma_{A|B}$ and thus $\sigma_A\in \cI$.
Taking the pure state $\ket{+}=\frac{1}{\sqrt{d_A}}\sum_i\ket{i}_A$, we get
\begin{eqnarray*}
\bra{+}\sum_{i,j}\norm{\rho^B_{ij}}_{\mathrm{tr}}\Prok{i}{j}_A\ket{+}\leq 2^{\lambda}\bra{+}\sigma_A\ket{+},
\end{eqnarray*}
which implies that $2^{\lambda}\geq \sum_{i,j}\norm{\rho^B_{ij}}_{\mathrm{tr}}$ as $\Innerm{+}{\sigma_\lambda}{+}={1}/{d_A}$ and
$\iinner{+}{i}\iinner{j}{+}={1}/{d_A}$ for any $i,j$. That is,
$C^{A|B}_{\max}(\rho_{AB})\geq \log(\sum_{i,j}\norm{\rho^B_{ij}}_{\mathrm{tr}})=\log(1+C^{A|B}_{l_1}(\rho_{AB}))$.
Combining with Proposition \ref{prop:l1vsMax}, we obtain the result.

\end{proof}

It is easy to see that pure bipartite states satisfy the conditions in Proposition \ref{prop:eq_l1_max}. Thus
the equation \eqref{eq:l1_max} holds for any bipartite pure states. Moreover, the bipartite states $\rho_{AB}$, which have the following
form  $\rho_{AB}=\sum_k\rho^{AB}_k$ with $\rho^{AB}_{k}=\proj{k}\ot \rho^{B}_{kk}+\proj{d_A-k}\ot \rho^{B}_{d_A-k,d_A-k}+\Prok{k}{d_A-k}\ot\rho^B_{k,d_A-k}
+\Prok{d_A-k}{k}\ot \rho^B_{d_A-k,k}$, also satisfy the conditions in Proposition \ref{prop:eq_l1_max}, that is, the equation \eqref{eq:l1_max} holds for
such states. For example, for the bipartite state $\rho_{AB}(\lambda)=\lambda\proj{\phi_+}+(1-\lambda)\proj{\psi_+}$ with $0\leq\lambda\leq1$,
$\ket{\phi_+}=\frac{1}{\sqrt{2}}(\ket{00}-\ket{11})$ and $\ket{\psi_+}=\frac{1}{\sqrt{2}}(\ket{01}-\ket{10})$, the  equation \eqref{eq:l1_max} holds.

Note that, other coherence measures defined on a single system, such as coherence weight \cite{Bu2017asym},  can also be
used to define the corresponding IQ coherence measures on bipartite systems in a similar way, which is omitted here.
Although the IQ  coherence measure depends on the local basis in subsystem A, it will become
the measures of classical-quantum correlation if we take the minimization over  all the local
basis on system A \cite{Nakano2013,AdessoJPA2016}. For example, let us take the minimization over all the local basis for  $l_1$ norm of IQ coherence as
follows,
\begin{eqnarray*}
Q^{A|B}_{l_1}(\rho_{AB})=\min_{\text{local basis on A}}C^{A|B}_{l_1}(\rho_{AB}),
\end{eqnarray*}
where $Q^{A|B}_{l_1}$ is called one-side negativity of quantumness \cite{Nakano2013,AdessoJPA2016}.

\section{Additivity of IQ coherence measures}
The above sections show that IQ coherence measures can capture the nonlocal correlation between subsystems. However, the measure of nonlocal
correlation may not be additive, such as the relative entropy of entanglement. Thus we discuss the additivity of IQ coherence measures
in this section. Let us begin with the simplest case, relative entropy and $l_1$ norm. In view of the definition, it is easy to see the additivity of
$C^{A|B}_r$ and $C^{A|B}_{l_1}$: for any two bipartite states $\rho_{A_1B_1}\in\cD(\cH_{A_1}\ot\cH_{B_1})$ and $\rho_{A_2B_2}\in\cD(\cH_{A_2}\ot\cH_{B_2})$,
then
\begin{eqnarray*}
\nonumber &&C^{A_1A_2|B_1B_2}_r(\rho_{A_1B_1}\ot\rho_{A_2B_2})\\
&=&C^{A_1|B_1}_r(\rho_{A_1B_1})+ C^{A_2|B_2}_f(\rho_{A_2B_2}),
\end{eqnarray*}
and
\begin{eqnarray*}
\nonumber &&1+C^{A_1A_2|B_1B_2}_{l_1}(\rho_{A_1B_1}\ot\rho_{A_2B_2})\\
&=&[1+C^{A_1|B_1}_{l_1}(\rho_{A_1B_1})]\cdot[1+C^{A_2|B_2}_{l_1}(\rho_{A_2B_2})].
\end{eqnarray*}

Now, we consider the additivity of IQ coherence measures $C^{A|B}_{\max}$ and $C^{A|B}_f$, for which we have the following propositions.
\begin{prop}
For any two bipartite states $\rho_{A_1B_1}\in\cD(\cH_{A_1}\ot\cH_{B_1})$ and $\rho_{A_2B_2}\in\cD(\cH_{A_2}\ot\cH_{B_2})$,
\begin{eqnarray}
\nonumber &&C^{A_1A_2|B_1B_2}_{\max}(\rho_{A_1B_1}\ot\rho_{A_2B_2})\\
&=&C^{A_1|B_1}_{\max}(\rho_{A_1B_1})+ C^{A_2|B_2}_{\max}(\rho_{A_2B_2}).
\end{eqnarray}

\end{prop}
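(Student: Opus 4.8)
The plan is to establish the two inequalities separately: subadditivity ($\leq$) directly from the defining minimization, and superadditivity ($\geq$) from a dual (witness) reformulation. Write $\rho_1=\rho_{A_1B_1}$, $\rho_2=\rho_{A_2B_2}$ and $\lambda_i=C^{A_i|B_i}_{\max}(\rho_{A_iB_i})$. For subadditivity I would pick optimal states $\sigma_1,\sigma_2\in\cI Q$ (the minimum is attained since $\cI Q$ is compact) with $\rho_1\leq 2^{\lambda_1}\sigma_1$ and $\rho_2\leq 2^{\lambda_2}\sigma_2$. The product $\sigma_1\ot\sigma_2$ is again incoherent-quantum for the bipartition $A_1A_2|B_1B_2$, since a product of two diagonal $A$-operators is diagonal in the product basis. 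Combined with the elementary fact that $0\leq A\leq B$ and $0\leq C\leq D$ imply $A\ot C\leq B\ot D$ — via $B\ot D-A\ot C=(B-A)\ot D+A\ot(D-C)\geq 0$ — this gives $\rho_1\ot\rho_2\leq 2^{\lambda_1+\lambda_2}\sigma_1\ot\sigma_2$, hence $C^{A_1A_2|B_1B_2}_{\max}(\rho_1\ot\rho_2)\leq\lambda_1+\lambda_2$.

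For superadditivity I would first recast the measure as a conic program. Absorbing the scalar $2^{\lambda}$ into the feasible state, one checks that $\operatorname{cone}(\cI Q)$ consists exactly of the $A$-block-diagonal positive operators $\omega=\sum_k\proj{k}_A\ot R_k$ with each $R_k\geq 0$, so that
\[
2^{C^{A|B}_{\max}(\rho_{AB})}=\min\{\Tr{\omega}:\ \omega\geq\rho_{AB},\ \omega=\textstyle\sum_k\proj{k}_A\ot R_k,\ R_k\geq 0\}.
\]
Dualizing, with a positive multiplier $W$ for the constraint $\omega\geq\rho_{AB}$, the requirement $\I-W$ lies in the dual cone of $\operatorname{cone}(\cI Q)$ translates into the condition that every $A$-diagonal block of $W$ is bounded by $\I_B$, yielding
\[
2^{C^{A|B}_{\max}(\rho_{AB})}=\max\{\Tr{W\rho_{AB}}:\ W\geq 0,\ \Innerm{k}{W}{k}_A\leq\I_B\ \ \forall k\}.
\]
Strong duality holds here because the primal is strictly feasible (take $\omega=c\,\I$ with $c>\norm{\rho_{AB}}$, noting $\I\in\operatorname{cone}(\cI Q)$), and the dual feasible set is compact (each diagonal block, and hence $\Tr{W}$, is bounded), so the dual optimum is attained by some witness $W_i$ for each factor, with $\Tr{W_i\rho_i}=2^{\lambda_i}$.

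Then superadditivity follows from weak duality applied to the product problem. I would use the product witness $W_1\ot W_2$, which is feasible for the $A_1A_2|B_1B_2$ problem: it is positive, and its diagonal blocks $\Innerm{k_1}{W_1}{k_1}_{A_1}\ot\Innerm{k_2}{W_2}{k_2}_{A_2}$ are tensor products of positive operators each bounded by $\I$, hence bounded by $\I$. Weak duality (primal minimum $\geq$ any dual-feasible value) then gives
\[
2^{C^{A_1A_2|B_1B_2}_{\max}(\rho_1\ot\rho_2)}\geq\Tr{(W_1\ot W_2)(\rho_1\ot\rho_2)}=\Tr{W_1\rho_1}\,\Tr{W_2\rho_2}=2^{\lambda_1}2^{\lambda_2},
\]
so $C^{A_1A_2|B_1B_2}_{\max}(\rho_1\ot\rho_2)\geq\lambda_1+\lambda_2$. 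Combining the two bounds yields the stated additivity. The main obstacle is the superadditivity direction, and concretely the dual step: one must correctly identify $\operatorname{cone}(\cI Q)$ as the $A$-block-diagonal positive operators, read off the matching dual-cone constraint $\Innerm{k}{W}{k}_A\leq\I_B$, and verify that the product witness remains feasible; once this is in place, weak duality for the product (together with strong duality only at the compact single-copy level to certify $\Tr{W_i\rho_i}=2^{\lambda_i}$) closes the argument.
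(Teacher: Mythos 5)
Your proof is correct and follows essentially the same route as the paper: subadditivity by tensoring the optimal $\cI Q$ states, and superadditivity by combining a dual (witness) characterization of $2^{C^{A|B}_{\max}}$ with feasibility of the product witness and weak duality. The only difference is that the paper simply cites the dual formula $2^{C^{A|B}_{\max}(\rho_{AB})}=\max\set{\Tr{\rho_{AB}\tau_{AB}}:\tau_{AB}\geq 0,\ \Delta_A\ot\mathbb{I}_B(\tau_{AB})=\mathbb{I}_{AB}}$ from Ref.~\cite{Bu2017c}, whereas you rederive an equivalent inequality-constrained version ($W\geq 0$, $\Innerm{k}{W}{k}_A\leq\mathbb{I}_B$) from scratch via conic duality, making the argument self-contained but structurally identical.
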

\begin{proof}

Due to definition of max-relative entropy of IQ coherence measure, there exists optimal $\cI Q$ states $\sigma_{A_1|B_1}$ and $\sigma_{A_2|B_2}$
such that $\rho_{A_iB_i}\leq 2^{C^{A|B}_{\max}(\rho_{A_iB_i})}\sigma_{A_i|B_i}$.
Hence, we have the following inequality,
\begin{eqnarray*}
\nonumber &&C^{A_1A_2|B_1B_2}_{\max}(\rho_{A_1B_1}\ot\rho_{A_2B_2})\\
&\leq&C^{A_1|B_1}_{\max}(\rho_{A_1B_1})+ C^{A_2|B_2}_{\max}(\rho_{A_2B_2}).
\end{eqnarray*}

Now, we prove the converse.
It has been proved in Ref. \cite{Bu2017c} that
\begin{eqnarray*}
2^{C^{A|B}_{\max}(\rho_{AB})}=\max_{\substack{
\tau_{AB}\geq0\\
\Delta_{A}\ot \mathbb{I}_B(\tau_{AB})=\mathbb{I}_{AB}
}}\Tr{\rho_{AB}\tau_{AB}}.
\end{eqnarray*}
Hence, there exist operators $\tau_{A_iB_i}$ such that
$\tau_{A_iB_i}\geq0$, $\Delta_{A_i}\ot \mathbb{I}_{B_i}(\tau_{A_iB_i})=\mathbb{I}_{A_iB_i}$ and
$
2^{C^{A_i|B_i}_{\max}(\rho_{AB})}=\Tr{\rho_{A_iB_i}\tau_{A_iB_i}},
$
for $i=1,2$. Then the operator $\tau_{A_1A_2B_1B_2}:=\tau_{A_1B_1}\ot\tau_{A_2B_2}$ satisfies the conditions
$\tau_{A_1A_2B_1B_2}\geq 0$ and $\Delta_{A_1}\ot \Delta_{A_2}\ot \mathbb{I}_{B_1B_2}(\tau_{A_1A_2B_1B_2})=\mathbb{I}_{A_1A_2B_1B_2}$, which implies that
$2^{C^{A_1A_2|B_1B_2}_{\max}(\rho_{A_1B_1}\ot\rho_{A_2B_2})}\geq \Tr{(\rho_{A_1B_1}\ot \rho_{A_2B_2})\tau_{A_1A_2B_1B_2}}$, i.e.,
\begin{eqnarray*}
\nonumber &&C^{A_1A_2|B_1B_2}_{\max}(\rho_{A_1B_1}\ot\rho_{A_2B_2})\\
&\geq&C^{A_1|B_1}_{\max}(\rho_{A_1B_1})+ C^{A_2|B_2}_{\max}(\rho_{A_2B_2}).
\end{eqnarray*}
Therefore, we obtain the result.

\end{proof}

\begin{prop}
For any two bipartite states $\rho_{A_1B_1}\in\cD(\cH_{A_1}\ot\cH_{B_1})$ and $\rho_{A_2B_2}\in\cD(\cH_{A_2}\ot\cH_{B_2})$,
\begin{eqnarray}
\nonumber &&C^{A_1A_2|B_1B_2}_f(\rho_{A_1B_1}\ot\rho_{A_2B_2})\\
&=&C^{A_1|B_1}_f(\rho_{A_1B_1})+ C^{A_2|B_2}_f(\rho_{A_2B_2}).
\end{eqnarray}

\end{prop}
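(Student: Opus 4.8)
The plan is to prove the two inequalities separately; the upper bound (subadditivity) is routine, while the lower bound (superadditivity) is the crux.

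First I would establish ``$\leq$''. Take optimal pure-state decompositions $\rho_{A_1B_1}=\sum_i p_i\proj{\psi_i}$ and $\rho_{A_2B_2}=\sum_j q_j\proj{\phi_j}$ attaining $C^{A_1|B_1}_f$ and $C^{A_2|B_2}_f$. After reordering subsystems to the cut $A_1A_2|B_1B_2$, the family $\{p_iq_j,\ \ket{\psi_i}\ot\ket{\phi_j}\}$ is a valid pure-state decomposition of $\rho_{A_1B_1}\ot\rho_{A_2B_2}$, so the definition gives $C^{A_1A_2|B_1B_2}_f(\rho_{A_1B_1}\ot\rho_{A_2B_2})\leq\sum_{i,j}p_iq_j\,S(\Delta_{A_1A_2}(\proj{\psi_i}\ot\proj{\phi_j}))$. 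Since dephasing factorizes, $\Delta_{A_1A_2}(\proj{\psi_i}\ot\proj{\phi_j})=\Delta_{A_1}(\proj{\psi_i})\ot\Delta_{A_2}(\proj{\phi_j})$, and additivity of the von Neumann entropy on tensor products turns the right-hand side into $\sum_i p_i S(\Delta_{A_1}(\proj{\psi_i}))+\sum_j q_j S(\Delta_{A_2}(\proj{\phi_j}))=C^{A_1|B_1}_f(\rho_{A_1B_1})+C^{A_2|B_2}_f(\rho_{A_2B_2})$.

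For ``$\geq$'' I would fix an optimal pure-state decomposition $\rho_{A_1B_1}\ot\rho_{A_2B_2}=\sum_k r_k\proj{\chi_k}$, write $\tau_k=\Ptr{B_1B_2}{\proj{\chi_k}}$ with marginals $\tau_k^{A_1},\tau_k^{A_2}$, so that $C^{A_1A_2|B_1B_2}_f=\sum_k r_k S(\Delta_{A_1A_2}(\tau_k))$. Tracing out $A_2B_2$ (resp. $A_1B_1$) makes $\{r_k,\cdot\}$ a (generally mixed) decomposition of $\rho_{A_1B_1}$ (resp. $\rho_{A_2B_2}$); the relaxed formula \eqref{eq:ex_cf}, which allows mixed decompositions, then yields $\sum_k r_k S(\Delta_{A_1}(\tau_k^{A_1}))\geq C^{A_1|B_1}_f$ and $\sum_k r_k S(\Delta_{A_2}(\tau_k^{A_2}))\geq C^{A_2|B_2}_f$. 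The target is therefore to show $\sum_k r_k S(\Delta_{A_1A_2}(\tau_k))\geq C^{A_1|B_1}_f+C^{A_2|B_2}_f$.

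I expect this last step to be the main obstacle. The naive move is useless: subadditivity of entropy gives $S(\Delta_{A_1A_2}(\tau_k))\leq S(\Delta_{A_1}(\tau_k^{A_1}))+S(\Delta_{A_2}(\tau_k^{A_2}))$, the wrong direction, reflecting the generic difficulty that a correlated decomposition of a product state may push the average joint dephased entropy below the sum of the marginal ones. To control this I would reduce $C^{A|B}_f$ to an entanglement-of-formation quantity of a maximally correlated state built on the coherence side, mirroring the correspondence underlying additivity of the ordinary coherence of formation \cite{Winter2016}, and then transport additivity of $E_f$ for maximally correlated states through this reduction; this converts the required superadditivity into an established additivity statement. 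Finally, the equality example following Theorem \ref{thm:cf_ef} shows the subadditive bound is attained, so additivity is equivalent to superadditivity and the argument closes once the latter is secured.
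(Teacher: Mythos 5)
Your ``$\leq$'' argument matches the paper's and is fine, but the crux you flag --- the superadditivity step --- is exactly what you do not prove, and the route you sketch would not close it. The paper's resolution is an \emph{exact} entropic chain rule, not any subadditivity estimate. For each pure state $\ket{\chi_k}$ in the optimal decomposition, expand it in the reference basis of $A_1$: $\ket{\chi_k}=\sum_i\sqrt{p_i}\,\ket{i}_{A_1}\ket{u_i}_{A_2B_1B_2}$. Then $\Delta_{A_1}\ot\Delta_{A_2}(\tau_k)=\sum_i p_i\proj{i}_{A_1}\ot\Delta_{A_2}(\Ptr{B_1B_2}{\proj{u_i}})$ is a classical--quantum state, so
\begin{eqnarray*}
S(\Delta_{A_1A_2}(\tau_k))=S(\Delta_{A_1}(\tau^{A_1}_k))+\sum_i p_i\,S\big(\Delta_{A_2}(\Ptr{B_2}{\omega_i})\big),
\qquad \omega_i:=\Ptr{B_1}{\proj{u_i}},
\end{eqnarray*}
which is an identity, with a \emph{conditional} average in place of the marginal entropy $S(\Delta_{A_2}(\tau_k^{A_2}))$ that defeated you. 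The first term is $\geq C^{A_1|B_1}_f$ of the reduced state of $\chi_k$ on $A_1B_1$ (apply \eqref{eq:ex_cf} with the trivial decomposition, equivalently concavity), and the second term is an average over the ensemble $\{p_i,\omega_i\}$, which is a valid \emph{mixed} decomposition of $\Ptr{A_1B_1}{\proj{\chi_k}}$; hence it is $\geq C^{A_2|B_2}_f$ of that reduced state by precisely the relaxed formula \eqref{eq:ex_cf} that you already invoked for a different purpose. This gives the paper's pure-state superadditivity \eqref{ineq:c_f_ad} for \emph{every} pure state relative to its own reduced states (no product structure needed); averaging over $k$ and using convexity of $C^{A|B}_f$ then finishes, since the $r_k$-averages of the reduced states reproduce $\rho_{A_1B_1}$ and $\rho_{A_2B_2}$.

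Your proposed substitute cannot be completed as stated. The natural embedding $V_A:\ket{i}_A\mapsto\ket{i}_A\ket{i}_{A'}$ does give $C^{A|B}_f(\rho_{AB})=E_f(\tilde\rho_{AA'B})$ across the cut $A'|AB$, with $\tilde\rho_{AA'B}=V_A\rho_{AB}V_A^{\dag}=\sum_{i,j}\Prok{i}{j}_A\ot\Prok{i}{j}_{A'}\ot\rho^B_{ij}$, but this state is \emph{not} maximally correlated across that cut unless every block $\rho^B_{ij}$ is a scalar times a fixed rank-one family (e.g., $B$ trivial or $\rho_{AB}$ pure). The ``established additivity statement'' you want to import --- additivity of $E_f$ for maximally correlated states, which underlies additivity of the ordinary $C_f$ \cite{Winter2016} --- is therefore only the $B$-trivial special case of the present proposition. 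What you actually need is superadditivity of $E_f$ on tensor products of states of the above flagged form, which is not an established result (superadditivity of $E_f$ fails in general) and is essentially equivalent in strength to the proposition itself, so the reduction is circular rather than a proof. Your closing remark is also a non sequitur: the equality example after Theorem \ref{thm:cf_ef} concerns a different inequality, and no attainability statement is needed --- once ``$\leq$'' holds for all states, additivity is by definition equivalent to ``$\geq$''.
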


\begin{proof}
Due to the definition of $C^{A|B}_f$, it is easy to get the inequality
\begin{eqnarray*}
&&C^{A_1A_2|B_1B_2}_f(\rho_{A_1B_1}\ot\rho_{A_2B_2})\\
&\leq& C^{A_1|B_1}_f(\rho_{A_1B_1})+ C^{A_2|B_2}_f(\rho_{A_2B_2}).
\end{eqnarray*}
Thus, we only need to prove the converse. First, we prove that
for any pure state $\ket{\psi}_{A_1A_2B_1B_2}$, the following inequality holds,
\begin{eqnarray}
\nonumber &&C^{A_1A_2|B_1B_2}_f(\psi_{A_1A_2B_1B_2})\\
\label{ineq:c_f_ad}&\geq& C^{A_1|B_1}_f(\sigma_{A_1B_1})+ C^{A_2|B_2}_f(\sigma_{A_2B_2}),
\end{eqnarray}
where $\sigma_{A_1B_1}, \sigma_{A_2B_2}$ are the corresponding reduced states of $\proj{\psi}_{A_1A_2B_1B_2}$.
Since the pure state $\ket{\psi}_{A_1A_2B_1B_2}$ can be written as
$\ket{\psi}_{A_1A_2B_1B_2}=\sum_i\sqrt{p_i}\ket{i}_{A_1}\ket{u_i}_{A_2B_1B_2}$, then
\begin{eqnarray*}
&&C^{A_1A_2|B_1B_2}_f(\psi_{A_1A_2B_1B_2})\\
&=&S(\Delta_{A_1}\ot\Delta_{A_2}(\sigma_{A_1A_2}))\\
&=&S(\Delta_{A_1}(\sigma_{A_1}))+\sum_ip_iS(\Delta_{A_2}(\sigma^{A_2}_i))\\
&=&S(\Delta_{A_1}(\sigma_{A_1}))+\sum_ip_iS(\Delta_{A_2}(\mathrm{Tr}_{B_2}\sigma^{A_2B_2}_i))\\
&\geq&C^{A_1|B_1}_f(\sigma_{A_1B_1})+C^{A_2|B_2}_f(\sigma_{A_2B_2}),
\end{eqnarray*}
where
$S(\Delta_{A_1}(\sigma_{A_1}))\geq C^{A_1|B_1}_f(\sigma_{A_1B_1}) $
results from the concavity of von Neumann entropy, and
$\sum_ip_iS(\Delta_{A_2}(\mathrm{Tr}_{B_2}\sigma^{A_2B_2}_i))\geq C^{A_2|B_2}_f(\sigma_{A_2B_2})$
comes from \eqref{eq:ex_cf}.

Moreover, there exists an optimal pure state decomposition of $\rho_{A_1B_1}\ot\rho_{A_2B_2}=\sum_i\lambda_i\proj{\psi_i}_{A_1A_2B_1B_2}$ such that
$C^{A_1A_2|B_1B_2}_f(\rho_{A_1B_1}\ot\rho_{A_2B_2})=\sum_i\lambda_iC^{A_1A_2|B_1B_2}_f(\proj{\psi_i}_{A_1A_2B_1B_2})$. Therefore,
\begin{eqnarray*}
&&C^{A_1A_2|B_1B_2}_f(\rho_{A_1B_1}\ot\rho_{A_2B_2})\\
&=&\sum_i\lambda_iC^{A_1A_2|B_1B_2}_f(\proj{\psi_i}_{A_1A_2B_1B_2})\\
&\geq&\sum_ip_i[C^{A_1|B_1}_f(\sigma^{A_1B_1}_i)+C^{A_2|B_2}_f(\sigma^{A_2B_2}_i)]\\
&\geq& C^{A_1|B_1}_f(\rho_{A_1B_1})+C^{A_2|B_2}_f(\rho_{A_2B_2}),
\end{eqnarray*}
where the first inequality comes from \eqref{ineq:c_f_ad}, the second inequality comes from the convexity
of $C^{A|B}_f$ with $\sigma^i_{A_1B_1}=\mathrm{Tr}_{A_2B_2}\proj{\psi_i}_{A_1A_2B_1B_2}, \sigma^{A_2B_2}_i=\mathrm{Tr}_{A_1B_1}\proj{\psi_i}_{A_1A_2B_1B_2}$, $\rho_{A_1B_1}=\sum_ip_i\sigma^{A_1B_1}_i$ and $\rho_{A_2B_2}=\sum_ip_i\sigma^{A_2B_2}_i$.

\end{proof}

Note that the additivity of $C^{A|B}_a$ is still unclear as the method used in the proof of the additivity of $C^{A|B}_f$
does not work for $C^{A|B}_a$.
Nevertheless, if the subsystems $B_i$ ( $i=1,2$ ) are trivial, i.e., the dimension is $1$, then
one has the additivity of the coherence measures. For example,   the additivity of IQ coherence measures $C^{A|B}_{\max}$ will
lead to the additivity of $C_{\max}$ if the subsystems $B_i$ ( $i=1,2$ ) are trivial.
\begin{cor}
Given two quantum states $\rho_{1}\in\cD(\cH_{A_1})$ and
$\rho_2\in\cD(\cH_{A_2})$, it holds that
\begin{eqnarray}
C_{\max}(\rho_1\ot\rho_2)=C_{\max}(\rho_1)+C_{\max}(\rho_2).
\end{eqnarray}
\end{cor}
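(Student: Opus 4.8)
The plan is to deduce this corollary directly from the additivity of the IQ max-relative entropy of coherence $C^{A|B}_{\max}$ established in the preceding proposition, by specializing to the situation in which the memory subsystems $B_1$ and $B_2$ are trivial, i.e.\ $\cH_{B_1}=\cH_{B_2}=\complex$.

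The crucial observation I would make first is that when $\dim\cH_B=1$ the IQ max-relative entropy of coherence $C^{A|B}_{\max}$ collapses to the ordinary max-relative entropy of coherence $C_{\max}$. Indeed, a trivial $B$ forces every density operator $\tau^B_i\in\cD(\cH_B)$ to be the scalar $1$, so any incoherent-quantum state $\sigma_{A|B}=\sum_i p_i\,\sigma^A_i\ot\tau^B_i$ reduces to $\sum_i p_i\,\sigma^A_i$, a convex combination of incoherent states and hence itself incoherent; conversely every incoherent state on $\cH_A$ is trivially of this form. Therefore the feasible sets coincide, $\cI Q=\cI$, and
\begin{eqnarray*}
C^{A|B}_{\max}(\rho_A)=\min_{\sigma\in\cI Q}D_{\max}(\rho_A||\sigma)=\min_{\sigma\in\cI}D_{\max}(\rho_A||\sigma)=C_{\max}(\rho_A)
\end{eqnarray*}
for every $\rho_A\in\cD(\cH_A)$.

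With this identification in hand, I would apply the additivity proposition for $C^{A|B}_{\max}$ with $\cH_{B_1}=\cH_{B_2}=\complex$, identifying $\rho_{A_1B_1}$ with $\rho_1$ and $\rho_{A_2B_2}$ with $\rho_2$, which immediately yields $C_{\max}(\rho_1\ot\rho_2)=C_{\max}(\rho_1)+C_{\max}(\rho_2)$. There is no genuine obstacle here: the entire content is the set equality $\cI Q=\cI$ under a trivial memory, and since the additivity of $C^{A|B}_{\max}$ is exact (it carries no dimension-dependent factor, unlike the $C_{l_1}$--$C_{\max}$ comparison in Proposition~\ref{prop:l1vsMax}), it transfers to $C_{\max}$ verbatim.
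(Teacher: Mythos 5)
Your proposal is correct and is essentially the paper's own argument: the paper derives this corollary by noting that the additivity of $C^{A|B}_{\max}$ implies additivity of $C_{\max}$ when the subsystems $B_i$ are trivial (one-dimensional). Your explicit verification that $\cI Q$ collapses to $\cI$ and hence $C^{A|B}_{\max}$ reduces to $C_{\max}$ under a trivial memory is precisely the detail the paper leaves implicit, so the two arguments coincide.
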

Due to the additivity of $C_{\max}$, we  can obtain the additivity of
robustness of coherence $ROC$ \cite{Napoli2016} as follows,
\begin{eqnarray}
\nonumber1+ROC(\rho_1\ot \rho_2)
=[1+ROC(\rho_1)]\cdot[1+ROC(\rho_2)],\\
\end{eqnarray}
which comes directly from the fact that
$C_{\max}(\rho)=\log (1+ROC(\rho))$ \cite{Bu2017b}.
Following the same method, it is easy to obtain the additivity
of coherence weight $C_w$\cite{Bu2017asym} as following,
\begin{eqnarray}
1-C_w(\rho_1\ot \rho_2)
=[1-C_w(\rho_1)]\cdot[1-C_w(\rho_2)].
\end{eqnarray}
Thus, the additivity of
robustness of coherence and
coherence weight are proved here, which will be useful to
the further study on the distribution of coherence in
multipartite systems quantified by robustness of coherence
and coherence weight.

\section{conclusion}
In this work, we have investigated the properties of the incoherent-quantum coherence measures defined by
relative entropy, max-relative entropy and $l_1$ norm on bipartite systems. We also introduce the IQ coherence of formation and assistance
on bipartite systems. And we have found the distribution of coherence of formation $C_f$ and assistance $C_a$ in bipartite systems: the total coherence of formation is lower ( upper ) bounded by the sum of coherence of formation (assistance) in each local subsystem and
entanglement of formation ( assistance ) between subsystems.
Besides, we have obtained the tradeoff relation between
coherence cost and entanglement cost, distillable coherence and distillable entanglement  in bipartite systems.
Moreover, we have  obtained the monogamy relationship of the IQ coherence of formation and assistance in tripartite systems.
Furthermore, the additivity of IQ coherence measures have been discussed.
 These results substantially advance the understanding of the physical laws that governs the distribution of quantum coherence in
bipartite systems and pave the way for the further researches in this direction.

\begin{acknowledgments}
K.F. Bu thanks Prof. Arthur Jaffe and Dr. Zhengwei Liu for their
hospitality in Harvard University and the support of  a grant from the Templeton
Religion Trust.
J.D. Wu is supported by the Natural Science Foundation of China (Grants No. 11171301, No. 10771191, and No. 11571307) and the Doctoral Programs Foundation of the Ministry
of Education of China (Grant No. J20130061).
S.M. Fei is supported by the Natural Science Foundation of China under No. 11675113.

\end{acknowledgments}

\bibliographystyle{apsrev4-1}
 \bibliography{Maxcoh-lit}

\appendix
\section{Properties of $C^{A|B}_f$}\label{apen:cf}
\begin{mproof}[Proof of strong monotonicity under IO on A side for $C^{A|B}_f$]
For any incoherent operation
$\Lambda^A_{IO}$ on A side with the set of Kraus operators $\set{K^A_{\mu}}$,
we need to prove $\sum_{\mu} q_{\mu}C^{A|B}_f(\rho_{AB})(\rho_{\mu})\leq C^{A|B}_f(\rho_{AB})$.
Due to the definition of $C^{A|B}_f$, there exists an optimal pure state decomposition of $\rho_{AB}=\sum_ip_i\proj{\psi_i}_{AB}$ such that
$C^{A|B}_f(\rho_{AB})=\sum p_iC^{A|B}_{r}(\proj{\psi_i}_{AB})$.
Let $\ket{\phi_{i,\mu}}_{AB}=K^A_{\mu}\ket{\psi_i}_{AB}/\sqrt{q_{i,\mu}}$ with $q_{i,\mu}=\Tr{K^A_{\mu}\proj{\psi_i}_{AB}K^{A,\dag}_{\mu}}$. Hence,
\begin{eqnarray*}
C^{A|B}_f(\rho_{AB})&=&\sum_ip_iC^{A|B}_r(\proj{\psi_i}_{AB})\\
&\geq& \sum_ip_i\sum_{\mu} q_{i,u}C^{A|B}_r(\proj{\phi_{i,\mu}})\\
&=& \sum_{\mu}q_{\mu}\sum_i \frac{p_iq_{i,\mu}}{q_{\mu}}C^{A|B}_r(\proj{\phi_{i,\mu}})\\
&\geq&\sum_{\mu}q_{\mu}C^{A|B}_r(\rho_{\mu}),
\end{eqnarray*}
where the first inequality comes from the fact that $C^{A|B}_r$ is strong monotonicity under IO on A side for
$\ket{\psi_i}_{AB}$ and the last inequality comes from the convexity of $C^{A|B}_{r}$.

\end{mproof}

\begin{mproof}[Proof of  monotonicity under CPTP maps on B side for $C^{A|B}_f$]
%The proof is similar to proof of strong monotonicity under IO on A side.
For CPTP map
$\Lambda^B$ on B side with the set of Kraus operators $\set{K^B_{\mu}}$,
we need to prove $C^{A|B}_f(\rho_{AB})(\Lambda^B(\rho_{\mu}))\leq C^{A|B}_f(\rho_{AB})$.
Due to the definition of $C^{A|B}_f$,
there exists an optimal pure state decomposition of $\rho_{AB}=\sum_ip_i\proj{\psi_i}_{AB}$ such that
$C^{A|B}_f(\rho_{AB})=\sum p_iS(\Delta_A(\proj{\psi_i}_{AB}))=\sum_ip_iS(\Delta_A(\rho^A_i))$ with $\rho^A_i=\Ptr{B}{\proj{\psi_i}_{AB}}$. Hence
$\Lambda^B(\rho_{AB})=\sum_{i,\mu}p_iK^B_{\mu}\proj{\psi_i}_{AB}K^{B,\dag}_{\mu}=\sum_ip_i\sum_{\mu}q_{i,\mu}\proj{\phi_{i,\mu}}_{AB}$
with $\ket{\phi_{i,\mu}}_{AB}=K^A_{\mu}\ket{\psi_i}_{AB}/\sqrt{q_{i,\mu}}$ and $q_{i,\mu}=\Tr{K^A_{\mu}\proj{\psi_i}_{AB}K^{A,\dag}_{\mu}}$. Thus
\begin{eqnarray*}
C^{A|B}_f(\Lambda^B(\rho_{AB}))
&\leq& \sum_{i,\mu}p_iq_{i,\mu}S(\Delta_A(\proj{\phi_{i,\mu}}))\\
&=&\sum_{i,\mu}p_iq_{i,\mu}S(\Delta_A\ot\mathrm{Tr}_B(\proj{\phi_{i,\mu}}))\\
&\leq& \sum_{i}p_iS(\sum_{\mu}q_{i,\mu}\Delta_A\ot\mathrm{Tr}_B(\proj{\phi_{i,\mu}}))\\
&=&\sum_ip_iS(\Delta_A(\rho^A_i))\\
&=&C^{A|B}_f(\rho_{AB}),
\end{eqnarray*}
where the first inequality comes from the definition of $C^{A|B}_f$ and the second inequality comes from the fact the concavity
of von Neumann entropy.
\end{mproof}

\begin{lem}\label{lem:c_fd}
Given a bipartite pure state $\ket{\psi}_{AB}$, it holds that
\begin{eqnarray}
S(\Delta_A(\psi_{AB}))=S(\Delta_A(\Ptr{B}{\psi_{AB}})).
\end{eqnarray}
\end{lem}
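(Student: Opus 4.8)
The plan is to expand the pure state in the reference basis of system $A$ and to verify that both sides collapse to the Shannon entropy of the same probability vector, namely the diagonal of the reduced state $\rho_A=\Ptr{B}{\proj{\psi}_{AB}}$.

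First I would write $\ket{\psi}_{AB}=\sum_i\sqrt{p_i}\,\ket{i}_A\ket{u_i}_B$ in the local basis $\set{\ket{i}_A}_i$ of $\cH_A$, with $p_i\geq 0$, $\sum_ip_i=1$ and each $\ket{u_i}_B$ normalized but not necessarily mutually orthogonal. Applying the complete dephasing $\Delta_A$ on the $A$ factor annihilates every off-diagonal block $\Prok{i}{j}_A$ with $i\neq j$, so that
\begin{eqnarray*}
\Delta_A(\proj{\psi}_{AB})=\sum_i p_i\,\proj{i}_A\ot\proj{u_i}_B.
\end{eqnarray*}
The key observation is that the rank-one terms $\proj{i}_A\ot\proj{u_i}_B$ are mutually orthogonal projectors: orthogonality is forced by the $\set{\ket{i}_A}_i$ being orthonormal, regardless of whether the $\ket{u_i}_B$ are. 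Hence the nonzero eigenvalues of $\Delta_A(\proj{\psi}_{AB})$ are exactly $\set{p_i}$, and $S(\Delta_A(\psi_{AB}))=-\sum_ip_i\log p_i$.

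For the right-hand side I would trace out $B$ to get $\rho_A=\sum_{k,l}\sqrt{p_kp_l}\,\iinner{u_l}{u_k}\,\Prok{k}{l}_A$, whose diagonal entries are $\Innerm{i}{\rho_A}{i}_A=p_i\,\iinner{u_i}{u_i}=p_i$. Since $\Delta_A(\rho_A)$ is the diagonal operator carrying precisely these entries, its eigenvalues are again $\set{p_i}$, so $S(\Delta_A(\Ptr{B}{\psi_{AB}}))=-\sum_ip_i\log p_i$ as well. Comparing the two computations gives the claimed equality.

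There is essentially no difficult step here; the only point that requires care is that $\Delta_A(\rho_A)$ is \emph{not} literally the partial trace of $\Delta_A(\proj{\psi}_{AB})$ once the $\ket{u_i}_B$ fail to be orthogonal, so one should not try to argue through a single monotonicity inequality. The clean route is the spectral one above: both operators carry the same multiset of nonzero eigenvalues $\set{p_i}$, hence the same von Neumann entropy.
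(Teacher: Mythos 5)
Your proof is correct and follows essentially the same route as the paper's: expand $\ket{\psi}_{AB}=\sum_i\sqrt{p_i}\ket{i}_A\ket{u_i}_B$ in the reference basis of $A$ and observe that both $\Delta_A(\proj{\psi}_{AB})=\sum_ip_i\proj{i}_A\ot\proj{u_i}_B$ and $\Delta_A(\Ptr{B}{\proj{\psi}_{AB}})=\sum_ip_i\proj{i}_A$ have nonzero spectrum $\set{p_i}$, hence the same entropy $-\sum_ip_i\log p_i$. One small correction to your closing remark: $\Delta_A(\Ptr{B}{\proj{\psi}_{AB}})$ \emph{is} literally the partial trace of $\Delta_A(\proj{\psi}_{AB})$, since dephasing on $A$ commutes with $\Ptr{B}{\cdot}$ regardless of whether the $\ket{u_i}_B$ are orthogonal; the genuine point of caution is that partial trace does not preserve entropy in general, which is exactly why the spectral argument you give (pure conditional states on $B$) is the right one.
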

\begin{proof}
Since the pure state $\ket{\psi}_{AB}$ can be expressed in the given basis $\set{\ket{i}_A}_i$ of
$\cH_A$ as follows,
\begin{eqnarray*}
\ket{\psi}_{AB}=\sum_i\sqrt{p_i}\ket{i}_A\ket{u_i}_B,
\end{eqnarray*}
with $p_i\geq 0$ and $\sum_ip_i=1$. Then
\begin{eqnarray*}
\Delta_A(\psi_{AB})&=&\sum_ip_i\proj{i}_A\ot \proj{u_i}_B,\\
\Delta_A(\Ptr{B}{\psi_{AB}})&=&\sum_ip_i\proj{i}_A,
\end{eqnarray*}
which implies that $S(\Delta_A(\psi_{AB}))=S(\Delta_A(\Ptr{B}{\psi_{AB}}))=-\sum_ip_i\log p_i$.

\end{proof}

\section{Strong monotonicity under IO on A side }\label{lem:str_mon}
The strong monotonicity under IO on A side can be proved
following the similar method used in \cite{Baumgratz2014}.
For incoherent operation on A side
$\Lambda^A_{IO}(\cdot)=\sum_\mu K^A_\mu(\cdot) K^{A,\dag}_\mu$,
$\rho^{AB}_{\mu}=K^A_n\rho_{AB} K^{A,\dag}_\mu/p_\mu$ and
$p_\mu=\Tr{K^A_\mu\rho_{AB} K^{A,\dag}_\mu}$. Thus
\begin{eqnarray*}
\sum_\mu p_\mu C^{A|B}_{l_1}(\rho^{AB}_{\mu})
&=&\sum_\mu p_\mu\sum_{i\neq j}\norm{\rho^{AB}_{\mu}}_{\mathrm{tr}}\\
&=&\sum_\mu\sum_{i\neq j }\norm{\Innerm{i}{K^A_\mu\rho_{AB}K^{A,\dag}_\mu}{j}}_{\mathrm{tr}}\\
&=&\sum_\mu\sum_{i\neq j}\norm{\sum_{r,s}[K^A_\mu]_{ir}[K^{A,\dag}_\mu]_{sj}\rho^B_{rs}}_\mathrm{{tr}},
\end{eqnarray*}
where $\rho^B_{rs}=\Innerm{r}{\rho^{AB}_{\mu}}{s}_A\in\cB(\cH_B)$.

Since $\Lambda^A$ is incoherent, then
$[K^A_\mu]_{ir}[K^{A,\dag}_\mu]_{rj}=\delta_{ij}$ for any $r$, where
$\delta_{ij}=1$ if $i=j$, otherwise $\delta_{ij}=0$. Therefore
\begin{eqnarray*}
\sum_\mu p_\mu C^{A|B}_{l_1}(\rho^{AB}_{\mu})
&=&\sum_\mu\sum_{i\neq j}\sum_{r\neq s}\norm{\sum_{r,s}[K^A_\mu]_{ir}[K^{A,\dag}_\mu]_{sj}\rho^B_{rs}}_{\mathrm{tr}}\\
&\leq&\sum_\mu\sum_{i\neq j}\norm{[K^A_\mu]_{ir}[K^{A,\dag}_n]_{sj}\rho^B_{rs,\mu}}_{\mathrm{tr}}\\
&=&\sum_{r\neq s}\norm{\rho^B_{rs,\mu}}_{\mathrm{tr}}
\sum_\mu \sum_{i\neq j}|[K^A_n]_{ir}[K^{A,\dag}_\mu]_{sj}|\\
&\leq& \sum_{r\neq s}\norm{\rho^B_{rs}}_{\mathrm{tr}},
\end{eqnarray*}
where the last inequality comes from the
fact that $\sum_\mu\sum_{i\neq j}|[K^A_\mu]_{ir}[K^{A,\dag}_\mu]_{sj}|\leq 1$
given in \cite{Baumgratz2014}. Thus, we obtain the strong monotonicity
of $C^{A|B}_{l_1}$ under IO on A side.

\section{Relation between $l_1$ norm and trace norm}\label{appen:l1_tr}
The trace norm $\norm{\cdot}_{\mathrm{tr}}$ is closely related to the $l_1$ norm $\norm{\cdot}_{l_1}$, for which we have the following relationship,
\begin{lem}\label{lem:11_vs_tr}
Given an operator $P\in\cB(\cH)$ and a fixed reference basis $\set{\ket{i}}_i$ of $\cH$. Then
\begin{eqnarray}
\norm{P}_{\mathrm{tr}}=\min_{U,V}\norm{UPV}_{l_1},
\end{eqnarray}
where the minimization is taken over all the unitaries $U,V$ acting on $\cH$ and $\norm{\cdot}_{l_1}$ is defined by the
the given basis.
\end{lem}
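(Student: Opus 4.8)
The plan is to prove the two inequalities $\norm{P}_{\mathrm{tr}}\leq \min_{U,V}\norm{UPV}_{l_1}$ and $\min_{U,V}\norm{UPV}_{l_1}\leq \norm{P}_{\mathrm{tr}}$ separately, the first by a general entrywise bound valid for all unitaries and the second by exhibiting an optimal pair of unitaries through the singular value decomposition.

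First I would establish the auxiliary fact that $\norm{Q}_{\mathrm{tr}}\leq \norm{Q}_{l_1}$ holds for every operator $Q\in\cB(\cH)$. Expanding $Q=\sum_{i,j}Q_{ij}\out{i}{j}$ in the fixed reference basis with $Q_{ij}=\Innerm{i}{Q}{j}$, the triangle inequality for the trace norm gives $\norm{Q}_{\mathrm{tr}}\leq\sum_{i,j}|Q_{ij}|\,\norm{\out{i}{j}}_{\mathrm{tr}}$. Each $\out{i}{j}$ is rank one with a single nonzero singular value equal to $1$, so $\norm{\out{i}{j}}_{\mathrm{tr}}=1$ and therefore $\norm{Q}_{\mathrm{tr}}\leq\sum_{i,j}|Q_{ij}|=\norm{Q}_{l_1}$.

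Next, since the trace norm is unitarily invariant, $\norm{P}_{\mathrm{tr}}=\norm{UPV}_{\mathrm{tr}}$ for all unitaries $U,V$; applying the entrywise bound to $Q=UPV$ then yields $\norm{P}_{\mathrm{tr}}\leq\norm{UPV}_{l_1}$ for every choice of $U,V$, hence $\norm{P}_{\mathrm{tr}}\leq\min_{U,V}\norm{UPV}_{l_1}$. For the reverse inequality I would invoke the singular value decomposition $P=W\Sigma X^{\dag}$, where $W,X$ are unitary and $\Sigma=\diag(s_1,\dots,s_d)$ collects the singular values $s_i\geq 0$. Taking $U=W^{\dag}$ and $V=X$ makes $UPV=\Sigma$ diagonal in the reference basis, so that all off-diagonal entries vanish and $\norm{\Sigma}_{l_1}=\sum_i s_i=\norm{P}_{\mathrm{tr}}$. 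This choice attains the value $\norm{P}_{\mathrm{tr}}$, giving $\min_{U,V}\norm{UPV}_{l_1}\leq\norm{P}_{\mathrm{tr}}$ and, combined with the previous step, the stated equality.

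There is no serious obstacle here; the only points requiring care are the direction of the elementary bound $\norm{\cdot}_{\mathrm{tr}}\leq\norm{\cdot}_{l_1}$, which rests on $\norm{\out{i}{j}}_{\mathrm{tr}}=1$, and the recognition that the singular value decomposition is precisely the device that rotates $P$ into diagonal form, where the entrywise $\ell_1$ norm collapses to the sum of singular values and thus coincides with the trace norm.
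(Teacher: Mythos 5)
Your proof is correct, and its overall skeleton is the same as the paper's: both arguments establish the entrywise bound $\norm{Q}_{\mathrm{tr}}\leq\norm{Q}_{l_1}$ for an arbitrary operator, combine it with unitary invariance of the trace norm to get $\norm{P}_{\mathrm{tr}}\leq\min_{U,V}\norm{UPV}_{l_1}$, and then attain equality by choosing the unitaries from the singular value decomposition so that $UPV$ is diagonal in the reference basis. The one point where you genuinely diverge is the proof of the auxiliary bound. The paper starts from the variational identity $\norm{P}_{\mathrm{tr}}=\sum_i\Innerm{x_i}{P}{y_i}$, with $\set{\ket{x_i}}_i$ and $\set{\ket{y_i}}_i$ the singular-vector bases, expands these matrix elements in the reference basis, and controls the resulting double sum with a Cauchy--Schwarz estimate $\sum_i\abs{\iinner{x_i}{j}\iinner{k}{y_i}}\leq 1$. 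You instead expand $Q=\sum_{i,j}Q_{ij}\out{i}{j}$ and apply the triangle inequality for the trace norm, using only that each rank-one term satisfies $\norm{\out{i}{j}}_{\mathrm{tr}}=1$. Your route is more elementary: it needs no Cauchy--Schwarz step and invokes the SVD only once (for the achievability direction), whereas the paper uses it in both halves. What the paper's version buys is the dual viewpoint on the trace norm, which makes transparent exactly where the inequality can be saturated; your version buys brevity and shows that the bound $\norm{\cdot}_{\mathrm{tr}}\leq\norm{\cdot}_{l_1}$ is nothing more than the triangle inequality applied to the matrix-unit expansion.
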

\begin{proof}
First, let us prove that for operator $P\in\cB(\cH)$, $\norm{P}_{\mathrm{tr}}\leq \norm{P}_{l_1}$.
Due to single value decomposition of $P$, there exists two orthonormal basis $\set{\ket{x_i}}_i$ and
$\set{\ket{y_i}}_i$ such that
\begin{eqnarray*}
\norm{P}_{\mathrm{tr}}&=&\sum_i\Innerm{x_i}{P}{y_i}\\
&=&\sum_i\sum_{j,k}\iinner{x_i}{j}\!\Innerm{j}{P}{k}\!\iinner{k}{y_i}\\
&\leq&\sum_{j,k}|\Innerm{j}{P}{k}|\sum_i|\iinner{x_i}{j}\iinner{k}{y_i}|\\
&\leq& \sum_{j,k}|\Innerm{j}{P}{k}|,
\end{eqnarray*}
where the last inequality comes from the fact that
$\sum_i|\iinner{x_i}{j}\iinner{k}{y_i}|\leq (\sum_i|\iinner{x_i}{j}|^2)^{\frac{1}{2}}(\sum_i|\iinner{k}{y_i}|^2)^{\frac{1}{2}}=1$
with $\set{\ket{x_i}}_i$ and  $\set{\ket{y_i}}_i$ being the orthonormal basis.
Thus, $\norm{P}_{\mathrm{tr}}=\norm{UPV}_{\mathrm{tr}}\leq\norm{UPV}_{l_1}$ for any two unitaries.

Besides, there exist unitaries $U$ and $V$ such that
$UPV=\sum_is_i\proj{i}$ with $\set{s_i}_i$ being the single value of
$P$, and thus $\norm{UPV}_{l_1}=\norm{P}_{\mathrm{tr}}$.

\end{proof}

\end{document}